\newcommand{\Sym}{\ensuremath{\mathrm{Sym}}}
\newcommand{\Aut}{\ensuremath{\mathrm{Aut}}}
\newcommand{\Iso}{\ensuremath{\mathrm{Iso}}}
\newcommand{\erel}[2]{\ensuremath{{#1}\kern -0.5pt e\kern 0.5pt {#2}}}
\newcommand{\False}{\mathbf{f}}
\newcommand{\True}{\mathbf{t}}
\newcommand{\NamedIndent}[2]{
\begin{center}
\begin{tabular}{@{}l@{\hspace*{3mm}}p{10.5cm}}
{#1} & {#2}
\end{tabular}
\end{center}
}
\newtheorem{lemma}{Lemma}
\theoremstyle{remark}
\newtheorem{example}{Example}
\theoremstyle{definition}
\begin{document}

\title[Adaptive prefix-assignment for symmetry reduction]{An adaptive prefix-assignment technique\\{} for symmetry reduction}

\author{Tommi Junttila$^\dag$}
\author{Matti Karppa$^\dag$}
\author{Petteri Kaski$^\dag$}
\author{Jukka Kohonen$^{\dag \ddag}$}
\address{$^\dag$ Aalto University, Department of Computer Science}
\email{Tommi.Junttila@aalto.fi}
\email{Matti.Karppa@aalto.fi}
\email{Petteri.Kaski@aalto.fi}
\address{$^\ddag$ \textit{Current address:} University of Helsinki, Department of Computer Science}
\email{Jukka.Kohonen@cs.helsinki.fi}

\begin{abstract}
This paper presents a technique for symmetry reduction that adaptively
assigns a prefix of variables in a system of constraints so that the
generated prefix-assignments are pairwise nonisomorphic under the
action of the symmetry group of the system. 
The technique is based on McKay's canonical extension framework 
[J.~Algorithms 26 (1998), no.~2, 306--324].
Among key features of the technique are 
(i) adaptability---the prefix sequence can be user-prescribed and truncated for compatibility with the group of symmetries; 
(ii) parallelizability---prefix-assignments can be processed in parallel independently of each other; 
(iii) versatility---the method is applicable whenever the group of symmetries can be concisely represented as the automorphism group of a vertex-colored graph; and
(iv) implementability---the method can be implemented relying on a canonical labeling map for vertex-colored graphs as the only nontrivial subroutine.
To demonstrate the practical applicability of our technique,
we have prepared an experimental open-source implementation of the technique 
and carry out a set of experiments that demonstrate ability to reduce symmetry 
on hard instances. Furthermore, we demonstrate that the implementation 
effectively parallelizes to compute clusters with multiple nodes via 
a message-passing interface.
\end{abstract}

\maketitle


\section{Introduction}

\subsection{Symmetry reduction}
Systems of constraints often have substantial symmetry. For example, 
consider the following system of Boolean clauses:
\begin{equation}
\label{eq:example}
(x_1\vee x_2)
\wedge
(x_1\vee \bar x_3\vee \bar x_5)
\wedge
(x_2\vee \bar x_4\vee \bar x_6)\,.
\end{equation}
The associative and commutative symmetries of disjunction and conjunction
induce symmetries between the variables of \eqref{eq:example}, a fact that 
can be captured by stating that the group $\Gamma$ generated by
the two permutations
$(x_1\ x_2)(x_3\ x_4)(x_5\ x_6)$ and
$(x_4\ x_6)$ consists of all permutations of the variables 
that map \eqref{eq:example} to itself. That is, $\Gamma$ is the 
\emph{automorphism group} of the system \eqref{eq:example},
cf.~Section~\ref{sect:symmetry}.

Known symmetry in a constraint system is a great asset from the 
perspective of solving the system, in particular since symmetry 
enables one to disregard partial solutions that are \emph{isomorphic} to 
each other under the action of $\Gamma$ on the space of partial solutions. 
Techniques for such \emph{isomorph rejection}%
\footnote{%
A term introduced by J.~D.~Swift~\cite{Swift:1960}; 
cf.~Hall and Knuth~\cite{Hall:1965} for 
a survey on early work on exhaustive computer search and combinatorial
analysis.}{}~\cite{Swift:1960}
(alternatively, \emph{symmetry reduction} or \emph{symmetry breaking}) 
are essentially mandatory if one desires an exhaustive traversal of 
the (pairwise nonisomorphic) solutions of a highly symmetric system of 
constraints, or if the system is otherwise difficult to solve, for example, 
with many ``dead-end'' partial solutions compared with the actual
number of solutions.

A prerequisite to symmetry reduction is that the symmetries are known. 
In many cases it is possible to automatically discover and compute these 
symmetries to enable practical and automatic symmetry reduction.
In this context the dominant computational approach for combinatorial systems
of constraints is to represent $\Gamma$ via the automorphism group of 
a vertex-colored graph that captures the symmetries in the system. 
Carefully engineered tools for working with symmetries of vertex-colored 
graphs~\cite{DargaLiffitonSakallahMarkov:2004,JunttilaKaski:2007,McKay:1981,McKayPiperno:2014} and 
permutation group algorithms~\cite{Butler:1991,Seress:2003} 
then enable one to perform symmetry reduction. For example, for purposes 
of symmetry computations we may represent~\eqref{eq:example} as the 
following vertex-colored graph:

\begin{equation}
\label{eq:example-graph}
\begin{array}{c}
\includegraphics[scale=0.8]{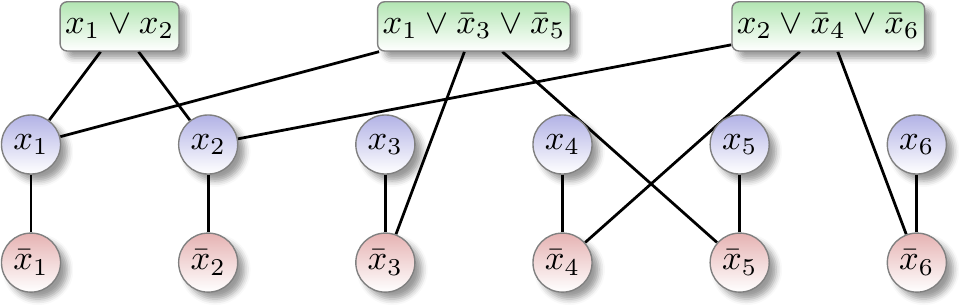}
\end{array}
\end{equation}

\noindent
In particular, the graph representation~\eqref{eq:example-graph} enables 
us to discover and reduce symmetry to
avoid redundant work when solving the underlying system~\eqref{eq:example}.

\subsection{Our contribution}
The objective of this paper is to present a novel technique for symmetry 
reduction on systems of constraints. The technique is based 
on adaptively assigning values to a prefix of the variables so that the 
obtained prefix-assignments are pairwise nonisomorphic under the action 
of~$\Gamma$. The technique can be seen as an instantiation of 
McKay's~\cite{McKay:1998} influential canonical extension framework for 
isomorph-free exhaustive generation.

To give a brief outline of the technique, 
suppose we are working with a system of constraints over a finite set $U$ 
of variables that take values in a finite set $R$. Suppose furthermore that 
$\Gamma\leq\Sym(U)$ is the automorphism group of the system. 
Select $k$ distinct variables $u_1,u_2,\ldots,u_k$ in $U$. 
These $k$ variables form the \emph{prefix sequence} considered by 
the method. The technique works by assigning values in $R$ to the variables of 
the prefix, in prefix-sequence order, with $u_1$ assigned first, then $u_2$, 
then $u_3$, and so forth, so that at each step the partial assignments 
so obtained are pairwise nonisomorphic under the action of $\Gamma$.
For example, in \eqref{eq:example} the partial assignments
$x_1\mapsto 0,\,x_2\mapsto 1$ and $x_1\mapsto 1,\,x_2\mapsto 0$ are
isomorphic since $(x_1\ x_2)(x_3\ x_4)(x_5\ x_6)\in\Gamma$ 
maps one assignment onto the other; in total there are three nonisomorphic
assignments to the prefix $x_1,x_2$ in \eqref{eq:example}, namely 
(i) $x_1\mapsto 0,\,x_2\mapsto 0$, 
(ii) $x_1\mapsto 0,\,x_2\mapsto 1$, 
and (iii) $x_1\mapsto 1,\,x_2\mapsto 1$.
Each partial assignment that represents
an isomorphism class can then be used to reduce redundant work when 
solving the underlying system by standard techniques---%
in the nonincremental case, the system is augmented with
a symmetry-breaking predicate requiring that one of the nonisomorphic
partial assignments holds, while in the incremental setting \cite{HeuleEtAl:HVC2011,Wieringa:inccnf}
the partial assignments can be solved independently or even in parallel.

Our contribution in this paper lies in how the isomorph rejection is 
implemented at the level of isomorphism classes of partial assignments
by careful reduction to McKay's~\cite{McKay:1998} isomorph-free exhaustive
generation framework. The key 
technical contribution is that we observe how to generate the partial 
assignments in a normalized form that enables both \emph{adaptability}
(that is, the prefix $u_1,u_2,\ldots,u_k$ can be arbitrarily selected 
to match the structure of $\Gamma$) and precomputation of 
the extending variable-value orbits along a prefix.

Among further key features of the technique are:
\begin{enumerate}
\item
\emph{Implementability.}
The technique can be implemented by relying on a canonical labeling map 
for vertex-colored graphs 
(cf.~\cite{JunttilaKaski:2007,McKay:1981,McKayPiperno:2014})
as the only nontrivial subroutine that is invoked once
for each partial assignment considered.%
\item
\emph{Versatility.}
The method is applicable whenever the group of symmetries can be 
concisely represented as a vertex-colored graph; 
cf.~\eqref{eq:example} and \eqref{eq:example-graph}. This is useful in 
particular when the underlying system has symmetries that are not easily 
discoverable from the final constraint encoding, for example, due to the fact 
that the constraints have been compiled or optimized%
\footnote{For a beautiful illustration, we refer to Knuth's \cite[\S7.1.2, Fig.~10]{Knuth:2011} example of optimum Boolean chains for 5-variable symmetric Boolean functions---from each optimum chain it is far from obvious that the chain represents a symmetric Boolean function. (See also Example~\ref{exa:brent}.)}{} from a higher-level representation in a symmetry-obfuscating manner.
A graphical representation can represent such symmetry directly and 
independently of the compiled/optimized form of the system.
\item
\emph{Parallelizability.}
As a corollary of implementing McKay's~\cite{McKay:1998} framework,
the technique does not need to store representatives of isomorphism classes 
in memory to perform isomorph rejection, which enables easy parallelization 
since the partial assignments can be processed independently of each other. 
\end{enumerate}
The required mathematical preliminaries
on symmetry and McKay's framework are reviewed in
Sections~\ref{sect:symmetry}~and~\ref{sect:mckay}, respectively. 
The main technical contribution of this paper is developed in
Section~\ref{sect:prefix-method} where we present the
prefix-assignment technique that we will subsequently extend to account for value symmetry in
Section~\ref{sect:value-symmetry}.
Our development in Sections~\ref{sect:prefix-method} and~\ref{sect:value-symmetry} relies on an abstract
group $\Gamma$, with the understanding that a concrete implementation
can be designed e.g.~in terms of a vertex-colored graph
representation, as will be explored in
Section~\ref{sect:colored-graphs}.  

To demonstrate the practical applicability of our technique, we have
prepared an open-source parallel implementation~\cite{ReduceGithub:2018}.
The implementation is structured as a preprocessor
that works with an explicitly given graph representation and utilizes
the \emph{nauty}~\cite{McKay:1981,McKayPiperno:2014}{} canonical
labeling software for vertex-colored graphs as a subroutine to prepare
an exhaustive collection of nonisomorphic prefix assignments relative
to a user-supplied prefix of variables, and the Message Passing 
Interface (MPI) for parallelization. Further details of the
implementation are presented in Section~\ref{sect:implementation}.  In
Section~\ref{sect:experiments}, we report on a set of experiments that 
(i) demonstrate the ability to reduce symmetry on hard instances, 
(ii) study the serendipity of an auxiliary graph for encoding the symmetries
in an instance, 
(iii) give examples of instances with hard combinatorial symmetry where 
our technique performs favorably in comparison with earlier techniques, and 
(iv) study the parallel speedup obtainable when we distribute 
the symmetry reduction task to a compute cluster with multiple compute nodes. 

\subsection{Earlier work}
A classical way to exploit symmetry in a system of constraints is 
to augment the system with so-called symmetry-breaking predicates (SBP)
that eliminate either some or all symmetric solutions \cite{AloulEtAl:IJCAI2003,CrawfordEtAl:KR1996,GentEtAl:Handbook2006,Sakallah:Handbook2009}.
Such constraints are typically lexicographic leader (lex-leader) constraints 
that are derived from a generating set for the group of symmetries $\Gamma$. 
Among recent work in this area, 
Devriendt \emph{et al.}~\cite{DevriendtEtAl:SAT2016} extend the approach 
by presenting a more compact way for expressing SBPs
and a method for detecting ``row interchangeabilities''.
Itzhakov and Codish~\cite{ItzhakovCodish:2016} present a method for 
finding a set of symmetries whose corresponding lex-leader constraints 
are enough to completely break symmetries in search problems on small
(10-vertex) graphs; this approach is extended by Codish \emph{et al.}~\cite{CodishEtAl:2016} by adding pruning predicates that simulate the 
first iterations of the equitable partition refinement algorithm 
of \emph{nauty}~\cite{McKay:1981,McKayPiperno:2014}.
Heule~\cite{Heule:SYNASC2016} shows that small complete 
symmetry-breaking predicates can be computed by considering arbitrary 
Boolean formulas instead of lex-leader formulas.

Our present technique can be seen as a method for producing 
symmetry-breaking predicates by augmenting the system of constraints with 
the disjunction of the nonisomorphic partial assignments.
The main difference to the related work above is that our technique does not
produce the symmetry-breaking predicate from a set of generators for 
$\Gamma$ but rather the predicate is produced recursively, and with the
possibility for parallelization, by classifying orbit representatives up 
to isomorphism using McKay's~\cite{McKay:1998} framework. As such our 
technique breaks all symmetry with respect to the prescribed prefix,
but comes at the cost of additional invocations of graph-automorphism and 
canonical-labeling tools. This overhead and increased symmetry reduction
in particular means that our technique is best suited for constraint 
systems with hard combinatorial symmetry that is not easily capturable
from a set of generators, such as symmetry in combinatorial classification
problems~\cite{KaskiOstergard:2006}. In addition to McKay's \cite{McKay:1998} 
canonical extension framework, other standard frameworks for isomorph-free 
exhaustive generation in this context include \emph{orderly algorithms} due to
Farad{\v z}ev~\cite{Faradzev:1978} and Read \cite{Read:1978},
as well as
the homomorphism principle for group actions due to
Kerber and  Laue~\cite{Kerber:1998}.

It is also possible to break symmetry within a constraint solver during 
the search by dynamically adding constraints that rule out symmetric parts of 
the search space (cf.~\cite{ChuEtAl:Constraints2014,GentEtAl:Handbook2006}).
If we use the nonisomorphic partial assignments produced by our technique 
as assumption sequences (cubes) in the incremental cube-and-conquer approach
\cite{HeuleEtAl:HVC2011,Wieringa:inccnf},
our technique can be seen as a restricted way of breaking the symmetries
in the beginning of the search, with the 
benefit---as with cube-and-conquer---that 
the portions of the search space induced by the partial 
assignments can be solved in parallel, either with complete independence 
or with appropriate sharing of information (such as conflict clauses) 
between the parallel nodes executing the search. For further work in dynamic 
symmetry breaking, cf.~\cite{Devriendt:2017,Benhamou:1994,Benhamou:2010,Sabharwal:2009,Schaafsma:2009,Benhamou:2010b,Devriendt:2012}.

For work on isomorphism and canonical labeling techniques, 
cf.~\cite{Babai2016,GroheNSW2018,LokshtanovPPS17,McKayPiperno:2014}.

\section{Preliminaries on group actions and symmetry}
\label{sect:symmetry}

This section reviews relevant mathematical preliminaries and notational
conventions for groups, group actions, symmetry, and isomorphism for our 
subsequent development. 
(Cf.~\cite{Butler:1991,Dixon:1996,Humphreys:1996,KaskiOstergard:2006,Kerber:1999,Seress:2003} for further reference.)

\subsection{Groups and group actions}
Let $\Gamma$ be a finite group and let $\Omega$ be a finite set 
(the domain) on which $\Gamma$ acts. For two groups $\Lambda$ and $\Gamma$, 
let us write $\Lambda\leq\Gamma$ to indicate that $\Lambda$ is a subgroup 
of $\Gamma$. We use exponential notation for group actions, and accordingly
our groups act from the right. That is, for an object $X\in\Omega$ 
and $\gamma\in \Gamma$, let us write $X^\gamma$ for the object in $\Omega$ 
obtained by acting on $X$ with $\gamma$. Accordingly, we have $X^{(\beta\gamma)}=(X^\beta)^\gamma$ for 
all $\beta,\gamma\in\Gamma$ and $X\in\Omega$. For a finite set $V$, let us
write $\Sym(V)$ for the group of all permutations of $V$ with composition
of mappings as the group operation.

\subsection{Orbit and stabilizer, the automorphism group}
For an object $X\in\Omega$ let us write 
$X^\Gamma=\{X^\gamma:\gamma\in\Gamma\}$ 
for the \emph{orbit} of $X$ under the action of $\Gamma$ and 
$\Gamma_X=\{\gamma\in \Gamma:X^\gamma=X\}\leq\Gamma$ 
for the \emph{stabilizer} subgroup of $X$ in $\Gamma$. 
Equivalently we say that $\Gamma_X$ is the \emph{automorphism group} of $X$ 
and write $\Aut(X)=\Gamma_X$ whenever $\Gamma$ is clear from the context;
if we want to stress the acting group we write $\Aut_\Gamma(X)$.
 
We write $\Omega/\Gamma=\{X^\Gamma:X\in\Omega\}$ for the set of all
orbits of $\Gamma$ on $\Omega$.  For $\Lambda\leq\Gamma$ and
$\gamma\in\Gamma$, let us write
$\Lambda^{\gamma}=\gamma^{-1}\Lambda\gamma
=\{\gamma^{-1}\lambda\gamma:\lambda\in\Lambda\} \leq\Gamma$ for the
$\gamma$-\emph{conjugate} of $\Lambda$.  For all $X\in\Omega$ and
$\gamma\in\Gamma$ we have $\Aut(X^\gamma)=\Aut(X)^\gamma$.  That is,
the automorphism groups of objects in an orbit are conjugates of each
other.

\subsection{Isomorphism}
We say that two objects are \emph{isomorphic} if they are on 
the same orbit of $\Gamma$ in $\Omega$. 
In particular, $X,Y\in\Omega$ are isomorphic 
if and only if there exists an \emph{isomorphism} $\gamma\in\Gamma$ 
from $X$ to $Y$ that satisfies $Y=X^\gamma$. An isomorphism from an
object to itself is an \emph{automorphism}. 
Let us write $\Iso(X,Y)$ for the set of all isomorphisms from $X$ to $Y$. 
We have that $\Iso(X,Y)=\Aut(X)\gamma=\gamma\Aut(Y)$
where $\gamma\in\Iso(X,Y)$ is arbitrary. 
Let us write $X\cong Y$ to indicate that $X$ and $Y$ are isomorphic. 
If we want to stress the group $\Gamma$ under whose action isomorphism 
holds, we write $X\cong_\Gamma Y$.

\subsection{Elementwise action on tuples and sets} 
Suppose that $\Gamma$ acts on two sets, $\Omega$ and $\Sigma$.
We extend the action to the Cartesian product $\Omega\times\Sigma$ 
elementwise by defining $(X,S)^\gamma=(X^\gamma,S^\gamma)$ 
for all $(X,S)\in\Omega\times\Sigma$ and $\gamma\in \Gamma$. 
Isomorphism extends accordingly; for example, we say that $(X,S)$ and $(Y,T)$ 
are isomorphic and write $(X,S)\cong (Y,T)$ if there exists a 
$\gamma\in \Gamma$ with $Y=X^\gamma$ and $T=S^\gamma$. 
Suppose that $\Gamma$ acts on a set $U$. 
We extend the action of $\Gamma$ on $U$ to an elementwise action of $\Gamma$ 
on subsets $W\subseteq U$ by setting $W^\gamma=\{w^\gamma:w\in W\}$ for all 
$\gamma\in \Gamma$ and $W\subseteq U$. In what follows we will tacitly work
with these elementwise actions on tuples and sets unless explicitly otherwise 
indicated.

\subsection{Canonical labeling and canonical form}
A function $\kappa:\Omega\rightarrow \Gamma$ is a 
\emph{canonical labeling map} for the action of $\Gamma$ on $\Omega$ if 
\begin{center}
\begin{tabular}{@{}l@{\hspace*{3mm}}p{10cm}}
(K) & 
for all $X,Y\in\Omega$ it holds that $X\cong Y$ implies 
$X^{\kappa(X)}=Y^{\kappa(Y)}$ (canonical labeling).
\end{tabular}
\end{center}
For $X\in\Omega$ we say that $X^{\kappa(X)}$ is 
the \emph{canonical form} of $X$ in $\Omega$. 
From (K) it follows that isomorphic objects have identical canonical forms, 
and the canonical labeling map gives an isomorphism that takes an object to 
its canonical form. 

We assume that the act of computing $\kappa(X)$ for a 
given $X$ produces as a side-effect a set of generators for the
automorphism group $\Aut(X)$.

\section{McKay's canonical extension method}

\label{sect:mckay}

This section reviews McKay's~\cite{McKay:1998} canonical extension method 
for isomorph-free exhaustive generation. Mathematically it will be convenient 
to present the method so that the isomorphism classes are captured as orbits 
of a group action of a group $\Gamma$, and extension takes place in one step 
from ``seeds'' to ``objects'' being generated, with the understanding that
the method can be applied inductively in multiple steps so that
the ``objects'' of the current step become the ``seeds'' for the next step. 
For completeness and ease of exposition, we also give a correctness proof for 
the method. We stress that all material in this section is
well known. (Cf.~\cite{KaskiOstergard:2006}.)

\subsection{Objects and seeds}
Let $\Omega$ be a finite set of \emph{objects} and let $\Sigma$ be a finite
set of \emph{seeds}. Let $\Gamma$ be a finite group that acts on $\Omega$ 
and $\Sigma$. Let $\kappa$ be a canonical labeling map for the action
of $\Gamma$ on $\Omega$. 

\subsection{Extending seeds to objects.}
Let us connect the objects and the seeds by means of a relation
$e\subseteq \Omega\times\Sigma$ that indicates which objects can be
built from which seeds by extension. For $X\in\Omega$ and $S\in\Sigma$
we say that $X$ \emph{extends} $S$ and write $\erel{X}{S}$ if
$(X,S)\in e$. We assume the relation $e$ satisfies
\begin{center}
\begin{tabular}{@{}l@{\hspace*{3mm}}p{10cm}}
(E1) & 
$e$ is a union of orbits of $\Gamma$, that is, $e^\Gamma=e$ (invariance), and\\[1mm]
(E2) &
for every object $X\in\Omega$ there exists a seed $S\in\Sigma$ such 
that $\erel{X}{S}$ (completeness).
\end{tabular}
\end{center}
For a seed $S\in\Sigma$, let us write
$e(S)=\{X\in\Omega:\erel{X}{S}\}$
for the set of all objects that extend $S$.

\subsection{Canonical extension}
We associate with each object a particular isomorphism-invariant
extension by which we want to extend the object from a seed.
A function $M:\Omega\rightarrow\Sigma$ is a \emph{canonical extension map} if 
\begin{center}
\begin{tabular}{@{}l@{\hspace*{3mm}}p{10cm}}
(M1) & 
for all $X\in\Omega$ it holds that $(X,M(X))\in e$
(extension), and\\[1mm]
(M2) &
for all $X,Y\in\Omega$ we have that $X\cong Y$ implies 
$(X,M(X))\cong (Y,M(Y))$
(canonicity).
\end{tabular}
\end{center}
That is, (M1) requires that $X$ is in fact an
extension of $M(X)$ and (M2) requires that isomorphic objects have isomorphic
canonical extensions. In particular, $X\mapsto (X,M(X))$ is 
a well-defined map from $\Omega/\Gamma$ to $e/\Gamma$.

\subsection{Generating objects from seeds}
Let us study the following procedure,
which is invoked
for exactly one representative $S\in\Sigma$ from each orbit in 
$\Sigma/\Gamma$:
\begin{center}
\begin{tabular}{@{}l@{\hspace*{3mm}}p{10cm}}
(P) & 
Let $S\in\Sigma$ be given as input. Iterate over all $X\in e(S)$. 
Perform zero or more isomorph rejection tests on $X$ and $S$. 
If the tests indicate we should accept $X$, visit $X$.
\end{tabular}
\end{center}

Let us first consider the case when there are no isomorph rejection tests. 

\begin{lemma}
\label{lem:mckay-completeness}
The procedure (P) visits every isomorphism 
class of objects in $\Omega$ at least once. 
\end{lemma}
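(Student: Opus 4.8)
The plan is to show that an arbitrary object $X\in\Omega$ is visited by procedure (P) when it is invoked on the appropriate seed representative. First I would use completeness (E2) to fix a seed $S_0\in\Sigma$ with $\erel{X}{S_0}$. Since (P) is invoked on exactly one representative from each orbit in $\Sigma/\Gamma$, let $S\in\Sigma$ be the representative of the orbit $S_0^\Gamma$, and pick $\gamma\in\Gamma$ with $S_0^\gamma=S$. The point is then to transport the extension relation along $\gamma$: by invariance (E1) we have $e^\Gamma=e$, so from $(X,S_0)\in e$ we get $(X^\gamma,S_0^\gamma)=(X^\gamma,S)\in e$, i.e.\ $X^\gamma\in e(S)$.

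Next I would observe that when (P) is run on input $S$, it iterates over all members of $e(S)$ and (with no isomorph rejection tests) visits each of them; in particular it visits $X^\gamma$. Since $X^\gamma\cong X$, the isomorphism class $X^\Gamma$ is visited. As $X$ was arbitrary, every isomorphism class of objects in $\Omega$ is visited at least once, which is the claim.

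I would also remark briefly on why ``at least once'' rather than ``exactly once'': without isomorph rejection tests, the same class can be hit from several seed orbits (because (E2) only guarantees existence of an extending seed, not uniqueness) and even multiple times from a single $e(S)$ if distinct $X'\in e(S)$ are isomorphic. This is exactly what the later isomorph rejection tests are designed to cut down, so flagging it here sets up the subsequent lemmas.

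There is essentially no obstacle here: the only thing to be careful about is the direction of the group action and making sure the representative-selection hypothesis of (P) is invoked correctly, namely that (P) sees the representative $S$ of $S_0^\Gamma$ and that the chosen $\gamma$ satisfies $S_0^\gamma = S$ with the right-action convention $X^{(\beta\gamma)}=(X^\beta)^\gamma$. The argument is a one-line diagram chase once (E1) and (E2) are in hand.
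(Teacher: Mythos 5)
Your argument is correct and essentially identical to the paper's: both use (E2) to obtain a seed extending the arbitrary object, the invocation hypothesis to pass to the orbit representative $S$, and (E1) to transport the extension relation along the connecting group element (you act forward by $\gamma$ where the paper acts backward by $\gamma^{-1}$, which is only a relabeling). No gaps.
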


\begin{proof}
To see that every isomorphism class is visited, let $Y\in\Omega$ be arbitrary. 
By (E2), there exists a $T\in\Sigma$ with $\erel{Y}{T}$. By our assumption on how
procedure (P) is invoked, $T$ is isomorphic to a unique $S$ such that 
procedure (P) is invoked with input $S$. Let $\gamma\in \Gamma$ be an 
associated isomorphism with $S^\gamma=T$. By (E1) and $\erel{Y}{T}$, 
we have $\erel{X}{S}$ for $X=Y^{\gamma^{-1}}$. By the structure of procedure (P) 
we observe that $X$ is visited and $X\cong Y$. Since $Y$ was arbitrary, 
all isomorphism classes are visited at least once.
\end{proof}

Let us next modify procedure (P) so that any two visits to the same
isomorphism class of objects originate from the same procedure invocation. 
Let $M:\Omega\rightarrow\Sigma$ be a canonical extension map.
Whenever we construct $X$ by extending $S$ in procedure (P), 
let us visit $X$ if and only if 
\begin{center}
\begin{tabular}{@{}l@{\hspace*{3mm}}p{10cm}}
(T1) & $(X,S)\cong (X,M(X))$. 
\end{tabular}
\end{center}

\begin{lemma}
\label{lem:mckay-t1}
The procedure (P) equipped with the test (T1) visits every isomorphism
class of objects in $\Omega$ at least once. Furthermore, 
any two visits to the same isomorphism class must 
(i) originate by extension from the same procedure invocation on input $S$, and 
(ii) belong to the same $\Aut(S)$-orbit of this seed $S$.
\end{lemma}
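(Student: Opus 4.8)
The plan is to prove Lemma~\ref{lem:mckay-t1} in two stages. First I would establish that procedure (P) equipped with the test (T1) still visits every isomorphism class at least once; this should follow by reusing the argument of Lemma~\ref{lem:mckay-completeness} and then checking that the object $X$ produced there actually passes (T1). Concretely, given arbitrary $Y\in\Omega$, I would fix the canonical extension $M(Y)$, note that by (M1) we have $\erel{Y}{M(Y)}$, and by (E1) and (E2) find a seed $S$ on the orbit of $M(Y)$ for which (P) is invoked, say $M(Y)=S^{\delta}$ for some $\delta\in\Gamma$. Then setting $X=Y^{\delta^{-1}}$ gives $\erel{X}{S}$ and, by (M2) applied to $X\cong Y$ together with the definition of $M$, one gets $(X,S)=(Y,M(Y))^{\delta^{-1}}\cong(X,M(X))$, so $X$ passes (T1) and is visited. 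Hence every isomorphism class is visited at least once.

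The second and more substantial stage is the ``furthermore'' claim. Suppose $X$ is visited from seed $S$ (invocation on $S$) and $X'$ is visited from seed $S'$ (invocation on $S'$), with $X\cong X'$. I would argue as follows. Passing (T1) means $(X,S)\cong(X,M(X))$ and $(X',S')\cong(X',M(X'))$. Since $X\cong X'$, canonicity (M2) gives $(X,M(X))\cong(X',M(X'))$, and chaining the three isomorphisms yields $(X,S)\cong(X',S')$. Therefore there is $\gamma\in\Gamma$ with $X^\gamma=X'$ and $S^\gamma=S'$. But (P) is invoked on exactly one representative from each $\Gamma$-orbit of $\Sigma$, and $S,S'$ lie on the same orbit, so $S=S'$; thus $\gamma\in\Aut(S)$. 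This simultaneously shows (i) the two visits originate from the same invocation, and gives (ii) because $X^\gamma=X'$ with $\gamma\in\Aut(S)$, so $X$ and $X'$ lie in the same $\Aut(S)$-orbit.

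I would close by noting the small bookkeeping point that when $X$ is visited from $S$, the pair $(X,S)$ indeed lies in $e$ by construction of (P) (we only iterate over $X\in e(S)$), so the isomorphisms above are genuinely isomorphisms of pairs in $\Omega\times\Sigma$ and the elementwise action on pairs from Section~\ref{sect:symmetry} applies as used. The main obstacle, such as it is, is purely organizational: being careful about the direction of the group elements (our actions are on the right) and making sure the isomorphism of pairs $(X,S)\cong(X',S')$ is extracted correctly — once that chain $(X,S)\cong(X,M(X))\cong(X',M(X'))\cong(X',S')$ is set up, the orbit-representative hypothesis on the invocation of (P) does the rest. No deep idea is needed beyond combining (T1), (M2), and the invocation convention; I expect the proof to be short.
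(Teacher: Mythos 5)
Your proposal is correct and follows essentially the same route as the paper's proof: the completeness part uses (M1), the orbit-representative invocation convention, (E1), and (M2) exactly as in the paper, and the uniqueness part chains $(X,S)\cong(X,M(X))\cong(X',M(X'))\cong(X',S')$ to force $S=S'$ and extract an element of $\Aut(S)$ carrying $X$ to $X'$. The only cosmetic difference is your attribution of the existence of the invoked representative $S$ to (E1) and (E2), whereas it is really just the invocation convention together with (M1); this does not affect the argument.
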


\begin{proof}
Suppose that $X$ is visited by extending $S$ and $Y$ is visited by
extending $T$, with $X\cong Y$. By (T1) we must thus have 
$(X,S)\cong(X,M(X))$ and $(Y,T)\cong(Y,M(Y))$. Furthemore, from (M2) 
we have $(X,M(X))\cong (Y,M(Y))$. Thus, $(X,S)\cong (Y,T)$ and 
hence $S\cong T$. Since $S\cong T$, we must in fact have $S=T$ by our
assumption on how procedure (P) is invoked. Since $X$ and $Y$ were arbitrary, 
any two visits
to the same isomorphism class must originate by extension from the same seed.
Furthermore, we have $(X,S)\cong (Y,S)$ and thus $X\cong_{\Aut(S)} Y$.
Let us next observe that every isomorphism class of objects is visited at least 
once. Indeed, let $Y\in\Omega$ be arbitrary. By (M1), we have $\erel{Y}{M(Y)}$. 
In particular, there is a unique $S\in\Sigma$ with $S\cong M(Y)$ such that
procedure (P) is invoked with input $S$. Let $\gamma\in\Gamma$ be an associated
isomorphism with $S^\gamma=M(Y)$. By (E1), we have $\erel{X}{S}$ for
$X=Y^{\gamma^{-1}}$. Furthermore, $X\cong Y$ implies by (M2) that 
$(X,M(X))\cong (Y,M(Y))=(X^\gamma,S^\gamma)\cong (X,S)$, so (T1) holds 
and $X$ is visited. Since $X\cong Y$ and $Y$ was arbitrary, every 
isomorphism class is visited at least once. 
\end{proof}

Let us next observe that the outcome of test (T1) is invariant on each
$\Aut(S)$-orbit of extensions of $S$.
\begin{lemma}
\label{lem:mckay-t2}
For all $\alpha\in\Aut(S)$ 
we have that 
(T1) holds for $(X,S)$ if and only if 
(T1) holds for $(X^\alpha,S)$.
\end{lemma}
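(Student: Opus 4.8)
The plan is to unwind the definition of the test (T1) and use the fact that $\alpha\in\Aut(S)$ acts as an isomorphism that fixes the second coordinate $S$, together with the canonicity property (M2) of the canonical extension map. The statement is essentially the observation that (T1), being phrased via the isomorphism relation $\cong$, cannot distinguish between $(X,S)$ and its image under an automorphism of $S$, since $(X,S)\cong(X^\alpha,S^\alpha)=(X^\alpha,S)$.

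Concretely, first I would suppose (T1) holds for $(X,S)$, i.e.\ $(X,S)\cong(X,M(X))$. I would then observe that applying $\alpha\in\Aut(S)$ gives $(X,S)=(X,S^\alpha)$\,---\,wait, more carefully: the elementwise action gives $(X,S)^\alpha=(X^\alpha,S^\alpha)=(X^\alpha,S)$ because $S^\alpha=S$. Hence $(X^\alpha,S)\cong(X,S)$, and by transitivity of $\cong$ together with the assumed $(X,S)\cong(X,M(X))$ we get $(X^\alpha,S)\cong(X,M(X))$. It remains to replace $(X,M(X))$ by $(X^\alpha,M(X^\alpha))$ on the right-hand side: since $X\cong X^\alpha$ (via $\alpha$), property (M2) yields $(X,M(X))\cong(X^\alpha,M(X^\alpha))$, so by transitivity $(X^\alpha,S)\cong(X^\alpha,M(X^\alpha))$, which is exactly (T1) for $(X^\alpha,S)$. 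The converse direction is symmetric: apply the same argument with $X$ replaced by $X^\alpha$ and $\alpha$ replaced by $\alpha^{-1}\in\Aut(S)$, noting $(X^\alpha)^{\alpha^{-1}}=X$.

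I do not anticipate a genuine obstacle here; the lemma is a routine consequence of the group action being elementwise on pairs and of (M2). The one point requiring a little care is making sure the argument is truly symmetric so that the ``if and only if'' is established in both directions\,---\,the cleanest way is to prove one implication and then invoke it with $\alpha^{-1}$, rather than re-deriving everything. A second minor point is to be explicit that $\cong$ is an equivalence relation (in particular transitive), which is immediate since it is the orbit-equivalence relation of the $\Gamma$-action; the excerpt already treats this as understood. No new machinery beyond (M2) and the definition of the elementwise action is needed.
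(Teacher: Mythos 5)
Your proposal is correct and matches the paper's proof in all essentials: both use that $(X^\alpha,S)=(X,S)^\alpha\cong(X,S)$ because $\alpha$ fixes $S$, invoke (M2) to get $(X,M(X))\cong(X^\alpha,M(X^\alpha))$, and conclude by transitivity of $\cong$. The only cosmetic difference is that the paper writes the equivalence as a single chain of isomorphisms, whereas you prove one direction and obtain the converse by substituting $\alpha^{-1}$.
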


\begin{proof}
From $X\cong X^\alpha$ and (M2) we have 
$(X,M(X))\cong (X^\alpha,M(X^\alpha))$. 
Thus, $(X,S)\cong (X,M(X))$ if and only if
$(X^\alpha,S)
=(X^\alpha,S^\alpha)
\cong (X,S)
\cong (X,M(X))
\cong (X^\alpha,M(X^\alpha))$.
\end{proof}

Lemma~\ref{lem:mckay-t2} in particular implies that we obtain complete
isomorph rejection by combining the test (T1) with a further test that
ensures complete isomorph rejection on $\Aut(S)$-orbits. 
Towards this end, let us associate an arbitrary order relation on every 
$\Aut(S)$-orbit on $e(S)$. Let us perform the following further test:
\begin{center}
\begin{tabular}{@{}l@{\hspace*{3mm}}p{10cm}}
(T2) & $X=\min X^{\Aut(S)}$. 
\end{tabular}
\end{center}

\noindent
The following lemma is immediate from Lemma~\ref{lem:mckay-t1} 
and Lemma~\ref{lem:mckay-t2}.
\begin{lemma}
\label{lem:mckay-full} 
The procedure (P) equipped with the tests (T1) and (T2) 
visits every isomorphism class of objects in $\Omega$ exactly once. 
\end{lemma}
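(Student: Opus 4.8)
The plan is to combine the three preceding lemmas, which have already done essentially all of the work. Lemma~\ref{lem:mckay-t1} establishes that procedure (P) with test (T1) visits every isomorphism class at least once, and that any two visits to the same isomorphism class (i) originate from the same seed $S$ on which (P) was invoked, and (ii) lie in the same $\Aut(S)$-orbit of $e(S)$. What remains is to show that adding test (T2) cuts each such $\Aut(S)$-orbit of accepted extensions down to exactly one visited representative, without destroying the ``at least once'' guarantee.

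First I would argue the ``at least once'' direction is preserved. Fix an arbitrary $Y\in\Omega$; by Lemma~\ref{lem:mckay-t1} some $X\cong Y$ is visited under (T1), say by extending the seed $S$ on which (P) is invoked. Now consider $X'=\min X^{\Aut(S)}$ in the chosen order on that orbit. Since $X'\in X^{\Aut(S)}$, we have $X'=X^\alpha$ for some $\alpha\in\Aut(S)$, so $X'\in e(S)$ and $X'\cong X\cong Y$; moreover by Lemma~\ref{lem:mckay-t2} test (T1) holds for $(X',S)$ because it holds for $(X,S)$. Finally $X'$ trivially passes (T2) since it is the minimum of its own $\Aut(S)$-orbit. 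Hence $X'$ is visited, and it is isomorphic to $Y$; as $Y$ was arbitrary, every isomorphism class is visited at least once.

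Next I would argue uniqueness. Suppose $X$ and $Y$ are both visited, pass both tests, and $X\cong Y$. By part~(i) of Lemma~\ref{lem:mckay-t1} they are extensions of the same seed $S$ (the unique one in its orbit on which (P) is invoked), and by part~(ii) they lie in the same $\Aut(S)$-orbit, i.e.\ $X^{\Aut(S)}=Y^{\Aut(S)}$. But both pass (T2), so $X=\min X^{\Aut(S)}=\min Y^{\Aut(S)}=Y$. Hence at most one representative of each isomorphism class is visited. Combining the two directions gives that every isomorphism class is visited exactly once, which is the assertion of Lemma~\ref{lem:mckay-full}.

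I do not anticipate a genuine obstacle here: the statement is flagged as ``immediate'' from Lemmas~\ref{lem:mckay-t1} and~\ref{lem:mckay-t2}, and indeed the only things needed beyond quoting those lemmas are the two short observations above---that the orbit-minimum of an (T1)-accepted, $Y$-isomorphic extension is still (T1)-accepted (via Lemma~\ref{lem:mckay-t2}) and still $Y$-isomorphic, and that two (T2)-survivors in the same $\Aut(S)$-orbit must coincide. The only point requiring any care is making sure the ``at least once'' argument correctly handles that (T2) is applied relative to the specific seed $S$ identified in Lemma~\ref{lem:mckay-t1}, and that the order relation used in (T2) is the fixed, arbitrary one on the $\Aut(S)$-orbit so that ``$\min$'' is well defined; both are already set up in the text preceding the statement.
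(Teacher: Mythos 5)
Your proof is correct and matches the paper's intent exactly: the paper declares the lemma ``immediate from Lemma~\ref{lem:mckay-t1} and Lemma~\ref{lem:mckay-t2}'' without writing out the details, and your two observations---that the $\Aut(S)$-orbit minimum of a (T1)-accepted extension is still accepted by (T1) (via Lemma~\ref{lem:mckay-t2}) and isomorphic to the target, and that two (T2)-survivors in the same $\Aut(S)$-orbit coincide---are precisely the omitted argument.
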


\subsection{A template for canonical extension maps}
We conclude this section by describing 
a template of how to use an arbitrary canonical labeling map 
$\kappa:\Omega\rightarrow \Gamma$ to construct a canonical extension map
$M:\Omega\rightarrow\Sigma$. 

For $X\in\Omega$ construct the canonical form $Z=X^{\kappa(X)}$.
Using the canonical form $Z$ only, identify a seed 
$T$ with $\erel{Z}{T}$. In particular, such a seed must exist by (E2). 
(Typically this identification can be carried out by studying $Z$ and finding
an appropriate substructure in $Z$ that qualifies as $T$. For example, $T$
may be the minimum seed in $\Sigma$ that satisfies $\erel{Z}{T}$. 
Cf.~Lemma~\ref{lem:t1p-correctness}.) 
Once $T$ has been identified, set $M(X)=T^{\kappa(X)^{-1}}$.
\begin{lemma}
\label{lem:canonical-extension-from-canonical-labeling}
The map $X\mapsto M(X)$ above is a canonical extension map.
\end{lemma}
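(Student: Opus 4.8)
The plan is to verify the two defining properties (M1) and (M2) of a canonical extension map for the construction $X\mapsto M(X)=T^{\kappa(X)^{-1}}$, where $Z=X^{\kappa(X)}$ is the canonical form of $X$ and $T$ is a seed, depending only on $Z$, with $\erel{Z}{T}$.

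\begin{proof}
We check conditions (M1) and (M2).

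For (M1), fix $X\in\Omega$ and let $Z=X^{\kappa(X)}$ be its canonical form, and let $T\in\Sigma$ be the seed identified from $Z$ with $\erel{Z}{T}$; such a $T$ exists by (E2). Then $(Z,T)\in e$. Acting by $\kappa(X)^{-1}\in\Gamma$ and using (E1) (that is, $e^\Gamma=e$), we obtain $(Z,T)^{\kappa(X)^{-1}}=(Z^{\kappa(X)^{-1}},T^{\kappa(X)^{-1}})=(X,M(X))\in e$, since $Z^{\kappa(X)^{-1}}=X^{\kappa(X)\kappa(X)^{-1}}=X$. Hence $\erel{X}{M(X)}$, which is (M1).

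For (M2), suppose $X\cong Y$ in $\Omega$. By the canonical labeling property (K), $X^{\kappa(X)}=Y^{\kappa(Y)}$; call this common canonical form $Z$. The seed $T$ is identified from $Z$ alone, so the \emph{same} $T$ is used for both $X$ and $Y$; thus $M(X)=T^{\kappa(X)^{-1}}$ and $M(Y)=T^{\kappa(Y)^{-1}}$. Now consider $\gamma=\kappa(X)\kappa(Y)^{-1}\in\Gamma$. On the first coordinate, $X^\gamma=X^{\kappa(X)\kappa(Y)^{-1}}=Z^{\kappa(Y)^{-1}}=Y$. On the second coordinate, $M(X)^\gamma=\bigl(T^{\kappa(X)^{-1}}\bigr)^{\kappa(X)\kappa(Y)^{-1}}=T^{\kappa(Y)^{-1}}=M(Y)$. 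Therefore $(X,M(X))^\gamma=(Y,M(Y))$, so $(X,M(X))\cong(Y,M(Y))$, which is (M2).

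Since both (M1) and (M2) hold, $M$ is a canonical extension map.
\end{proof}

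Both steps are essentially bookkeeping with the exponential notation for the right action; the only genuine point requiring care is that the seed $T$ must be a function of the canonical form $Z$ \emph{only} (as emphasized in the template), since this is precisely what makes $T$ independent of the choice of $X$ within its isomorphism class and thus lets the single element $\gamma=\kappa(X)\kappa(Y)^{-1}$ simultaneously transport $X$ to $Y$ and $M(X)$ to $M(Y)$. I do not anticipate any substantial obstacle beyond keeping the conjugation conventions consistent.
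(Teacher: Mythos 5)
Your proof is correct and follows essentially the same route as the paper: (M1) via invariance (E1) applied to $(Z,T)\in e$ with the group element $\kappa(X)^{-1}$, and (M2) via the witness $\gamma=\kappa(X)\kappa(Y)^{-1}$ after noting that (K) forces a common canonical form $Z$ and hence a common seed $T$. Your version merely spells out the coordinate computations that the paper leaves implicit.
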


\begin{proof}
By (E1) we have $\erel{X}{M(X)}$ because $Z^{\kappa(X)^{-1}}=X$, 
$T^{\kappa(X)^{-1}}=M(X)$, and $\erel{Z}{T}$. 
Thus, (M1) holds for $M$. To verify (M2), let $X,Y\in\Omega$ with 
$X\cong Y$ be arbitrary. Since $X\cong Y$, by (K) we have 
$X^{\kappa(X)}=Z=Y^{\kappa(Y)}$. 
It follows that $M(X)=T^{\kappa(X)^{-1}}$ and $M(Y)=T^{\kappa(Y)^{-1}}$,
implying that $\gamma=\kappa(X)\kappa(Y)^{-1}$ is an isomorphism witnessing
$(X,M(X))\cong (Y,M(Y))$. Thus, (M2) holds for $M$.
\end{proof}

\section{Generation of partial assignments via a prefix sequence}

\label{sect:prefix-method}

This section describes an instantiation of McKay's method that
generates partial assignments of values to a set of variables $U$ one
variable at a time following a \emph{prefix sequence} at the level of
isomorphism classes given by the action of a group $\Gamma$ on $U$. We
postpone the extension to include variables to
Section~\ref{sect:value-symmetry}. Let $R$ be a finite set where the
variables in $U$ take values.

\subsection{Partial assignments, isomorphism, restriction}
\label{sect:partial}

For a subset $W\subseteq U$ of variables, let us say that 
a \emph{partial assignment} of values to $W$ is a mapping $X:W\rightarrow R$. 
Isomorphism for partial assignments is induced by the following group action. 
Let $\gamma\in \Gamma$ act
on $X:W\rightarrow R$ by setting 
$X^\gamma:W^\gamma\rightarrow R$ where 
$X^\gamma$ is defined for all $u\in W^\gamma$ by 
\begin{equation}
\label{eq:assignment-action}
X^\gamma(u)=X(u^{\gamma^{-1}})\,. 
\end{equation}
\begin{lemma}
\label{lem:action}
The action \eqref{eq:assignment-action} is well-defined.
\end{lemma}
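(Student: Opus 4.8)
The plan is to verify the two things that "well-defined" means for the formula \eqref{eq:assignment-action}: first, that $X^\gamma$ is genuinely a function with domain $W^\gamma$ and codomain $R$; and second, that the assignment $\gamma\mapsto(X\mapsto X^\gamma)$ respects the group structure, i.e.\ $X^{(\beta\gamma)}=(X^\beta)^\gamma$ and $X^{\mathrm{id}}=X$, so that \eqref{eq:assignment-action} really does define an action of $\Gamma$.

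First I would check the functional part. Fix $\gamma\in\Gamma$ and $X\colon W\to R$. For any $u\in W^\gamma$ we have $u=w^\gamma$ for some $w\in W$, and since $\gamma$ is a permutation of $U$ this $w=u^{\gamma^{-1}}$ is the unique such element; in particular $u^{\gamma^{-1}}\in W$, so $X(u^{\gamma^{-1}})$ is defined and lies in $R$. Hence the formula assigns to each $u\in W^\gamma$ exactly one value of $R$, so $X^\gamma\colon W^\gamma\to R$ is a well-defined partial assignment on the variable set $W^\gamma$ (which is the image of $W$ under the elementwise action already fixed in Section~\ref{sect:symmetry}).

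Next I would check compatibility with composition. Let $\beta,\gamma\in\Gamma$. The domain of $X^{(\beta\gamma)}$ is $W^{\beta\gamma}=(W^\beta)^\gamma$, which is also the domain of $(X^\beta)^\gamma$, so the domains agree. For $u$ in this common domain, applying \eqref{eq:assignment-action} twice gives $(X^\beta)^\gamma(u)=X^\beta(u^{\gamma^{-1}})=X\bigl((u^{\gamma^{-1}})^{\beta^{-1}}\bigr)=X(u^{\gamma^{-1}\beta^{-1}})=X(u^{(\beta\gamma)^{-1}})=X^{(\beta\gamma)}(u)$, using that the action on $U$ satisfies $(u^{\gamma^{-1}})^{\beta^{-1}}=u^{\beta^{-1}\gamma^{-1}}$ wait—here one must be careful with the order: since $\Gamma$ acts on the right on $U$, we have $u^{\gamma^{-1}\beta^{-1}}=(u^{\gamma^{-1}})^{\beta^{-1}}$, and $\gamma^{-1}\beta^{-1}=(\beta\gamma)^{-1}$, so the chain closes. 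Finally $X^{\mathrm{id}}(u)=X(u^{\mathrm{id}})=X(u)$ for $u\in W^{\mathrm{id}}=W$, so the identity acts trivially.

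The only genuinely delicate point—and the one I would present most carefully—is the bookkeeping of inverses and the right-action convention in the composition step: it is easy to write $(u^{\gamma^{-1}})^{\beta^{-1}}=u^{\gamma^{-1}\beta^{-1}}$ and then mistakenly simplify $\gamma^{-1}\beta^{-1}$ to $(\gamma\beta)^{-1}$ rather than the correct $(\beta\gamma)^{-1}$. The inverse in \eqref{eq:assignment-action} is present precisely so that $X\mapsto X^\gamma$ is a left-to-right (covariant) action rather than a contravariant one, and verifying $X^{(\beta\gamma)}=(X^\beta)^\gamma$ is exactly the sanity check that this inverse was placed correctly. Everything else is immediate from $\gamma$ being a bijection of $U$.
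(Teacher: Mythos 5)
Your proof is correct and takes essentially the same route as the paper's: the paper's proof is exactly the verification that the identity acts trivially and that $X^{\gamma\beta}(u)=X\bigl(u^{(\gamma\beta)^{-1}}\bigr)=X\bigl((u^{\beta^{-1}})^{\gamma^{-1}}\bigr)=(X^\gamma)^\beta(u)$, which is your composition computation with the roles of $\beta$ and $\gamma$ swapped. Your additional observation that $X^\gamma$ is a genuine function from $W^\gamma$ to $R$ is a harmless (and correct) extra check that the paper leaves implicit, and your handling of the inverse order $\gamma^{-1}\beta^{-1}=(\beta\gamma)^{-1}$ under the right-action convention is exactly the delicate point the paper's chain of equalities is designed to verify.
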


\begin{proof}
We observe that for the identity $\epsilon\in\Gamma$ of $\Gamma$ we
have $X^\epsilon=X$. Furthermore, for all $\gamma,\beta\in \Gamma$ and 
$u\in W^{\gamma\beta}=(W^\gamma)^\beta$ we have 
\[
X^{\gamma\beta}(u)
=X\bigl(u^{(\gamma\beta)^{-1}}\bigr)
=X\bigl((u^{\beta^{-1}})^{\gamma^{-1}}\bigr)
=X^\gamma\bigl(u^{\beta^{-1}}\bigr)
=(X^\gamma)^\beta(u)\,.\qedhere
\]
\end{proof}

For an assignment $X:W\rightarrow R$, let us write $\underline{X}=W$ for
the underlying set of variables assigned by $X$. Observe that the underline
map is a homomorphism of group actions in the sense that 
\begin{equation}
\label{eq:homomorphism}
\underline{X^\gamma}=\underline{X}^\gamma 
\end{equation}
holds for all $\gamma\in \Gamma$ and $X:W\rightarrow R$.
For $Q \subseteq \underline{X}$, let us write $X|_Q$ for the
restriction of $X$ to $Q$. 

\subsection{The prefix sequence and generation of normalized assignments}
We are now ready to describe the generation procedure. Let us begin by
prescribing the prefix sequence. 
Let $u_1,u_2,\ldots,u_k$ be $k$ distinct elements of $U$ and
let $U_j=\{u_1,u_2,\ldots,u_j\}$ for $j=0,1,\ldots,k$. In particular
we observe that 
\[
U_0\subseteq U_1\subseteq\cdots\subseteq U_k
\]
with $U_j\setminus U_{j-1}=\{u_j\}$ for all $j=1,2,\ldots,k$.

For $j=0,1,\ldots,k$ let $\Omega_j$ consist of all partial assignments 
$X:W\rightarrow R$ with $W\cong U_j$. Or what is the same, using the 
underline notation, $\Omega_j$ consists of all partial assignments $X$ with 
$\underline{X}\cong U_j$.

We rely on canonical extension to construct exactly one object from each 
orbit of $\Gamma$ on $\Omega_j$, using as seeds exactly one object from each
orbit of $\Gamma$ on $\Omega_{j-1}$, for each $j=1,2,\ldots,k$.
We assume the availability of canonical labeling maps 
$\kappa:\Omega_j\rightarrow\Gamma$ for each $j=1,2,\ldots,k$.

Our construction procedure will work with objects that are in a normal form
to enable precomputation for efficient execution of the subsequent 
tests for isomorph rejection. Towards this end, 
let us say that $X\in\Omega_j$ is \emph{normalized} if $\underline{X}=U_j$. 
It is immediate from our definition of $\Omega_j$ 
and \eqref{eq:assignment-action} that each orbit in $\Omega_j/\Gamma$ 
contains at least one normalized object. 

Let us begin with a high-level description of the construction procedure,
to be followed by the details of the isomorph rejection tests and a proof
of correctness. Fix $j=1,2,\ldots,k$ and 
study the following procedure, which we assume is invoked for exactly
one normalized representative $S\in\Omega_{j-1}$ from each orbit 
in $\Omega_{j-1}/\Gamma$:
\begin{center}
\begin{tabular}{@{}l@{\hspace*{3mm}}p{10.5cm}}
(P') & 
Let a normalized $S\in\Omega_{j-1}$ be given as input. 
For each $p\in u_j^{\Aut(U_{j-1})}$ and each $r\in R$, 
construct the assignment 
\[
X:U_{j-1}\cup\{p\}\rightarrow R
\]
defined by $X(p)=r$ and $X(u)=S(u)$ for all $u\in U_{j-1}$. 
Perform the isomorph rejection tests (T1') and (T2') on $X$ and $S$.
If both tests accept, visit $X^{\nu(p)}$ where $\nu(p)\in\Aut(U_{j-1})$ 
normalizes $X$.
\end{tabular}
\end{center}

\medskip
\noindent
From an implementation point of view, 
it is convenient to precompute the orbit $u_j^{\Aut(U_{j-1})}$ 
together with group elements $\nu(p)\in\Aut(U_{j-1})$ for 
each $p\in u_j^{\Aut(U_{j-1})}$ that satisfy 
$p^{\nu(p)}=u_j$. Indeed, a constructed $X$ with 
$\underline X=U_{j-1}\cup\{p\}$ can now be normalized by acting with 
$\nu(p)$ on $X$ to obtain a normalized $X^{\nu(p)}$ isomorphic to $X$. 

\subsection{The isomorph rejection tests} 
Let us now complete the description of
procedure (P') by describing the two isomorph rejection tests (T1') and (T2').
This subsection only describes the tests with an implementation in mind, the
correctness analysis is postponed to the following subsection.

Let us assume that the elements of $U$ have been arbitrarily ordered
and that $\kappa:\Omega_j\rightarrow\Gamma$ is a canonical labeling
map. Suppose that $X$ has been constructed by extending
a normalized $S$ with
$\underline{X}=\underline{S}\cup\{p\}=U_{j-1}\cup\{p\}$.  The first
test is:
\smallskip
\begin{center}
\begin{tabular}{@{}l@{\hspace*{3mm}}p{10cm}}
(T1') & 
Subject to the ordering of $U$, select the minimum $q\in U$ 
such that $q^{\kappa(X)^{-1}\nu(p)}\in u_j^{\Aut(U_j)}$.
Accept if and only if $p\cong_{\Aut(X)} q^{\kappa(X)^{-1}}$.
\end{tabular}
\end{center}

\smallskip
\noindent
From an implementation perspective we observe that we can precompute
the orbit $u_j^{\Aut(U_j)}$. Furthermore, the only computationally nontrivial
part of the test is the computation of $\kappa(X)$ since we assume that we 
obtain generators for $\Aut(X)$ as a side-effect of this computation. Indeed,
with generators for $\Aut(X)$ available, it is easy to compute the orbits
$U/\Aut(X)$ and hence to test whether $p\cong_{\Aut(X)} q^{\kappa(X)^{-1}}$.

Let us now describe the second test:
\begin{center}
\begin{tabular}{@{}l@{\hspace*{3mm}}p{10cm}}
(T2') & 
Accept if and only if $p=\min p^{\Aut(S)}$
subject to the ordering of $U$.
\end{tabular}
\end{center}

\smallskip
\noindent
From an implementation perspective we observe that 
since $S$ is normalized we have 
$\Aut(S)\leq\Aut(\underline{S})=\Aut(U_{j-1})$ and thus the orbit 
$u_j^{\Aut(U_{j-1})}$ considered by procedure (P') partitions into 
one or more $\Aut(S)$-orbits. Furthermore, generators for $\Aut(S)$ are 
readily available (due to $S$ itself getting accepted in the test (T1') 
at an earlier level of recursion), and thus the orbits 
$u_j^{\Aut(U_{j-1})}/\Aut(S)$ and their minimum elements are cheap to compute. 
Thus, a fast implementation of procedure (P') will in most cases execute 
the test (T2') before the more expensive test (T1').

\begin{example}
\label{exa:tree}
  We display below a possible search tree for 
  the system of clauses \eqref{eq:example} and
  the prefix sequence $x_3,x_4,x_5,x_6$.
  Each node in the search tree displays the prefix assignment $X$ (top),
  its canonical version $X^{\kappa(X)}$ (middle)
  and
  its normalized version $X^{\nu(p)}$ (bottom).
  The variables have the Boolean domain $\{\False,\True\}$ and
  the assignments are given in the literal form; for example, we write
  $\bar x_3 x_4$ for the assignment $\{x_3 \mapsto \False, x_4 \mapsto \True\}$.
  \medskip
  
  \noindent\includegraphics[width=\textwidth]{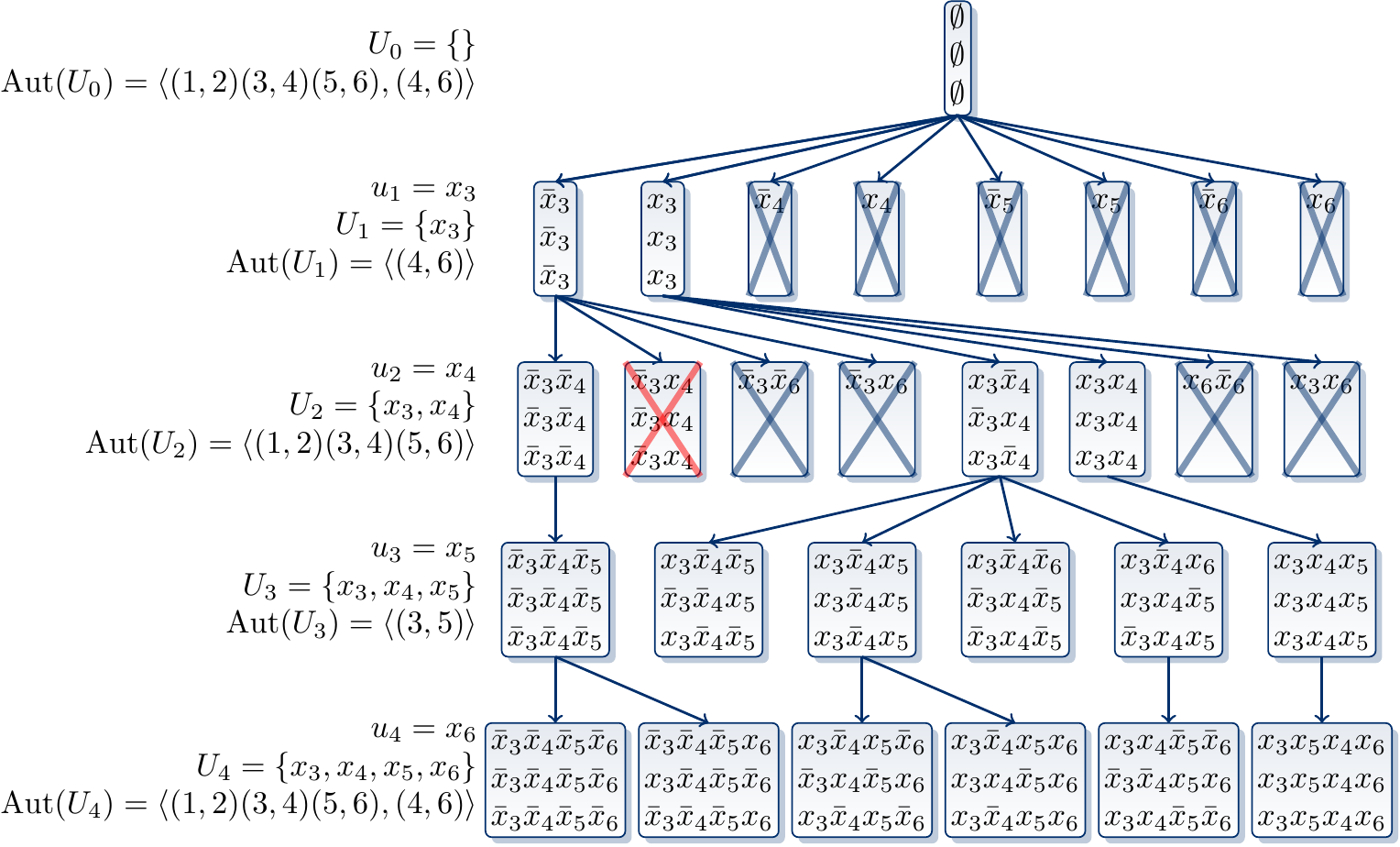}

  \medskip
  \noindent
  The nodes with a red cross are nodes eliminated by the test (T1') and
  the ones with a blue cross are eliminated by the test (T2').
  (For convenience of display, these eliminated nodes are only drawn in 
  the first three levels above.)
  For instance, the node with $X = \bar x_3 x_4$ is eliminated by the test (T1')
  because the minimum $q$ such that $q^{\kappa(X)^{-1} \nu(x_4)} \in x_4^{\Aut(U_2)}=\{x_3,x_4\}$ when $\kappa(X) = \{x_3 \mapsto x_3,x_4 \mapsto x_4\}$ and $\nu(x_4) = \{x_3 \mapsto x_3,x_4 \mapsto x_4\}$ is $x_3$ and
  $x_4 \not\cong_{\Aut(X)} q^{\kappa(X)^{-1}}=x_3$ as
  $\Aut(X) = \{\epsilon\}$.
  On the other hand,
  the node with $X = \bar x_3 \bar x_6$ is eliminated by the test (T2')
  as $x_6 \neq \min x_6^{\Aut(\bar x_3)}$ and
  $x_6^{\Aut(\bar x_3)} = \{x_4,x_6\}$.
  We observe that the search tree has dead-end nodes that do not
  extend to any full prefix assignment.
\end{example}

\subsection{Correctness}
We now establish the correctness of procedure (P') together with 
the tests (T1') and (T2') by reduction to McKay's framework 
and~Lemma~\ref{lem:mckay-full}. 
Fix $j=1,2,\ldots,k$. 
Let us start by defining the extension relation 
$e\subseteq\Omega_j\times\Omega_{j-1}$ 
for all $X\in\Omega_j$ and $S\in\Omega_{j-1}$ by 
setting $\erel{X}{S}$ if and only if 
\begin{equation}
\label{eq:extends-relation}
\text{there exists a $\gamma\in \Gamma$ such that 
$\underline{X}^\gamma=U_j$, 
$\underline{S}^\gamma=U_{j-1}$, and $X^\gamma|_{U_{j-1}}=S^\gamma$}.
\end{equation}
This relation is well-defined in the context of McKay's framework:

\begin{lemma}
\label{lem:extends}
The relation \eqref{eq:extends-relation} satisfies (E1) and (E2).
\end{lemma}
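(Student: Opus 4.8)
The plan is to verify the two conditions of McKay's framework directly from the definition \eqref{eq:extends-relation}, treating each as a short exercise in chasing the group action. For (E1), invariance $e^\Gamma = e$, I would take $(X,S)\in e$ with witness $\gamma$ and an arbitrary $\delta\in\Gamma$, and show $(X^\delta, S^\delta)\in e$ by exhibiting the witness $\delta^{-1}\gamma$. The computation is routine: since $\underline{X^\delta} = \underline{X}^\delta$ by \eqref{eq:homomorphism}, we get $\underline{X^\delta}^{\,\delta^{-1}\gamma} = \underline{X}^\gamma = U_j$, and similarly for $S$; the restriction condition $(X^\delta)^{\delta^{-1}\gamma}|_{U_{j-1}} = (S^\delta)^{\delta^{-1}\gamma}$ follows from $X^\gamma|_{U_{j-1}} = S^\gamma$ once one checks that restriction commutes with the action (which follows from \eqref{eq:assignment-action} and the fact that $\underline{X}^\gamma = U_j \supseteq U_{j-1}$). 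So $e$ is a union of $\Gamma$-orbits.

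For (E2), completeness, I would take an arbitrary $X\in\Omega_j$ and must produce $S\in\Omega_{j-1}$ with $\erel{X}{S}$. By definition of $\Omega_j$ we have $\underline{X}\cong U_j$, so fix $\gamma\in\Gamma$ with $\underline{X}^\gamma = U_j$. Then $U_{j-1}\subseteq U_j = \underline{X}^\gamma$, so set $Y = X^\gamma$ (a normalized object in $\Omega_j$) and let $S' = Y|_{U_{j-1}}$; then $\underline{S'} = U_{j-1} \cong U_{j-1}$, so $S'\in\Omega_{j-1}$, and finally put $S = S'^{\gamma^{-1}}$. Taking $\gamma$ itself as the witness in \eqref{eq:extends-relation}: $\underline{X}^\gamma = U_j$, $\underline{S}^\gamma = \underline{S'} = U_{j-1}$, and $X^\gamma|_{U_{j-1}} = Y|_{U_{j-1}} = S' = S^\gamma$, so $\erel{X}{S}$ holds.

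I do not anticipate a genuine obstacle here; the only thing requiring a small amount of care is the bookkeeping that restriction is compatible with the group action in the sense that $(X^\delta)|_{U_{j-1}} = (X|_{U_{j-1}{}^{\delta^{-1}}})^\delta$ when $U_{j-1}\subseteq\underline{X}^\delta$, which is what makes the restriction clause in \eqref{eq:extends-relation} transport correctly under (E1). I would state this compatibility as a one-line observation from \eqref{eq:assignment-action} before doing the two verifications, so that both (E1) and (E2) reduce to substituting the claimed witness and reading off the three defining conditions.
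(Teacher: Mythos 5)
Your proposal is correct and follows essentially the same route as the paper: for (E1) one transports a witness $\gamma$ for $(X,S)$ to the witness $\beta^{-1}\gamma$ for $(X^\beta,S^\beta)$ using \eqref{eq:homomorphism}, and for (E2) one normalizes $X$ and restricts away the value at $u_j$. The only cosmetic difference is that your ``restriction commutes with the action'' observation is not actually needed, since $(X^\delta)^{\delta^{-1}\gamma}=X^\gamma$ as objects by the group-action axioms, so the restriction clauses are literally identical conditions.
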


\begin{proof}
To establish (E1), let $X\in\Omega_j$ and $S\in\Omega_{j-1}$ be arbitrary.
It suffices to show that for all $\beta\in\Gamma$
we have $\erel{X}{S}$ if and only if $\erel{X^\beta}{S^\beta}$.
Let $\beta\in\Gamma$ be arbitrary. By~\eqref{eq:homomorphism}, for all
$\gamma\in\Gamma$ we have $\underline{X}^\gamma=U_j$ 
if and only if 
$\underline{X^\beta}\phantom{)\!\!}^{\beta^{-1}\gamma}=\underline{X}^{\beta\beta^{-1}\gamma}
=\underline{X}^\gamma=U_j$. Similarly, 
for any
$\gamma\in\Gamma$ we have $\underline{S}^\gamma=U_{j-1}$ 
if and only if 
$\underline{S^\beta}\phantom{)\!\!}^{\beta^{-1}\gamma}=\underline{S}^{\beta\beta^{-1}\gamma}
=\underline{S}^\gamma=U_{j-1}$. 
Finally, for any $\gamma\in\Gamma$ that satisfies 
$\underline{X}^\gamma=U_j$ and $\underline{S}^\gamma=U_{j-1}$
(equivalently, $\beta^{-1}\gamma$ satisfies
$\underline{X^\beta}\phantom{)\!\!}^{\beta^{-1}\gamma}=U_j$ and
$\underline{S^\beta}\phantom{)\!\!}^{\beta^{-1}\gamma}=U_{j-1}$),
we have 
$X^\gamma|_{U_{j-1}}=S^\gamma$
if and only if 
$(X^\beta)^{\beta^{-1}\gamma}|_{U_{j-1}}=
X^\gamma|_{U_{j-1}}=
S^\gamma=
(S^\beta)^{\beta^{-1}\gamma}$.
To establish (E2), observe that 
for an arbitrary $X\in\Omega_j$ there exists a $\gamma\in \Gamma$ with 
$\underline{X}^\gamma=U_j$, and 
thus $\erel{X}{S}$ holds for $S=T^{\gamma^{-1}}$, 
where $T$ is obtained from $Y=X^\gamma$ by deleting the assignment to 
the variable $u_j$. 
\end{proof}

The following lemma establishes that the iteration in 
procedure~(P') constructs exactly the objects $X\in e(S)$; cf.~procedure~(P).

\begin{lemma}
\label{lem:extends-set}
Let $S\in\Omega_{j-1}$ be normalized. For all $X\in\Omega_j$ we have
$\erel{X}{S}$ if and only if there exists a 
$p\in u_j^{\Aut(U_{j-1})}$ with 
$\underline{X}=U_{j-1}\cup\{p\}$ and $X|_{U_{j-1}}=S$.
\end{lemma}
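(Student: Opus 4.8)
The plan is to prove the two directions of the equivalence separately, in both cases unwinding the definition \eqref{eq:extends-relation} of the extension relation and using the fact that $S$ is normalized (so $\underline{S}=U_{j-1}$) together with the structure of the orbit $u_j^{\Aut(U_{j-1})}$.

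First I would prove the ``if'' direction. Assume there is a $p\in u_j^{\Aut(U_{j-1})}$ with $\underline{X}=U_{j-1}\cup\{p\}$ and $X|_{U_{j-1}}=S$. Pick $\nu\in\Aut(U_{j-1})$ with $p^{\nu}=u_j$ (this exists precisely because $p$ lies in the orbit of $u_j$ under $\Aut(U_{j-1})$). Then $\nu\in\Gamma$, and I would check the three conditions of \eqref{eq:extends-relation} with $\gamma=\nu$: since $\nu$ fixes $U_{j-1}$ setwise and sends $p$ to $u_j$, we get $\underline{X}^{\nu}=(U_{j-1}\cup\{p\})^{\nu}=U_{j-1}\cup\{u_j\}=U_j$; also $\underline{S}^{\nu}=U_{j-1}^{\nu}=U_{j-1}$; and finally $X^{\nu}|_{U_{j-1}}=(X|_{U_{j-1}})^{\nu}=S^{\nu}$ (using that $\nu$ preserves $U_{j-1}$, so restriction commutes with the action, and $X|_{U_{j-1}}=S$). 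Hence $\erel{X}{S}$. The one point needing a small remark is that restriction to a $\gamma$-invariant set commutes with the $\gamma$-action, which is immediate from \eqref{eq:assignment-action}.

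Next I would prove the ``only if'' direction, which is the more substantive one. Assume $\erel{X}{S}$, so there is $\gamma\in\Gamma$ with $\underline{X}^{\gamma}=U_j$, $\underline{S}^{\gamma}=U_{j-1}$, and $X^{\gamma}|_{U_{j-1}}=S^{\gamma}$. Since $S$ is normalized, $\underline{S}=U_{j-1}$, so $\underline{S}^{\gamma}=U_{j-1}^{\gamma}=U_{j-1}$ says that $\gamma$ stabilizes $U_{j-1}$ setwise; hence its restriction (as a permutation of $U$) induces an element of $\Aut(U_{j-1})$. From $\underline{X}^{\gamma}=U_j=U_{j-1}\cup\{u_j\}$ and $\underline{X}=U_{j-1}\cup\{p\}$ for $p=\underline{X}\setminus U_{j-1}$ (a single element, since $\underline{X}\cong U_j$ forces $|\underline{X}|=j$ and $X|_{U_{j-1}}=S$ has domain $U_{j-1}$, once we know $\underline X \supseteq U_{j-1}$), we get $p^{\gamma}=u_j$, so $p\in u_j^{\Aut(U_{j-1})}$. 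Then from $X^{\gamma}|_{U_{j-1}}=S^{\gamma}=S$ (the last equality because $\gamma$ restricted to $U_{j-1}$ is an automorphism of $U_{j-1}$ but we additionally need $S^\gamma = S$—here I would instead argue directly: acting back by $\gamma^{-1}$, which also stabilizes $U_{j-1}$, gives $X|_{U_{j-1}}=(X^{\gamma}|_{U_{j-1}})^{\gamma^{-1}}=(S^{\gamma})^{\gamma^{-1}}=S$). This yields $\underline{X}=U_{j-1}\cup\{p\}$ and $X|_{U_{j-1}}=S$, as required.

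The main obstacle I anticipate is bookkeeping around the claim that the extra variable $p=\underline X\setminus U_{j-1}$ is genuinely a single element and that $U_{j-1}\subseteq\underline X$ in the first place—this has to be extracted from $\underline X\cong U_j$ together with the witnessing $\gamma$, rather than assumed. Concretely: from $\underline X^\gamma=U_j$ and $U_{j-1}^\gamma=U_{j-1}$ one gets $U_{j-1}\subseteq\underline X^\gamma$ pulled back to $U_{j-1}\subseteq\underline X$, and then $\underline X=U_{j-1}\cup\{p\}$ with $p=u_j^{\gamma^{-1}}$. Once this is pinned down, the rest is the routine observation that an element of $\Gamma$ stabilizing $U_{j-1}$ restricts to an element of $\Aut(U_{j-1})$ and that restriction of a partial assignment commutes with the action of any group element fixing the relevant index set.
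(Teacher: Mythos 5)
Your proposal is correct and follows essentially the same route as the paper: both directions are obtained by observing that normalization of $S$ forces the witnessing $\gamma$ of \eqref{eq:extends-relation} to lie in $\Aut(U_{j-1})$, taking $p=u_j^{\gamma^{-1}}$ for the ``only if'' direction and $\gamma\in\Aut(U_{j-1})$ with $p^\gamma=u_j$ for the ``if'' direction, with the identity $X|_{U_{j-1}}=(X^\gamma|_{U_{j-1}})^{\gamma^{-1}}=(S^\gamma)^{\gamma^{-1}}=S$ exactly as in \eqref{eq:act-restrict}. Your extra bookkeeping that $U_{j-1}\subseteq\underline{X}$ and that $p$ is a single element is a point the paper leaves implicit, and it is handled correctly.
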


\begin{proof}
From \eqref{eq:extends-relation} we have that $\erel{X}{S}$ if and only if 
there exists a $\gamma\in \Gamma$ with $\underline{X}^\gamma=U_j$, 
$\underline{S}^{\gamma}=U_{j-1}$,
and $X^\gamma|_{U_{j-1}}=S^\gamma$. 
Since $S$ is normalized,
we have $\underline{S}=U_{j-1}$ and hence
$U_{j-1}^\gamma=\underline{S}^{\gamma}=U_{j-1}$. Thus, 
$\gamma\in \Aut(U_{j-1})$ and 
\begin{equation}
\label{eq:act-restrict}
X|_{U_{j-1}}
=X^{\gamma\gamma^{-1}}|_{U_{j-1}}
=(X^\gamma|_{U_{j-1}})^{\gamma^{-1}}
=(S^{\gamma})^{\gamma^{-1}}
=S\,. 
\end{equation}
Thus, to establish the ``only if'' direction of the lemma, take 
$p=u_j^{\gamma^{-1}}$, and for the ``if'' direction, 
take $\gamma\in \Aut(U_{j-1})$ with $p^\gamma=u_j$.
\end{proof}

Next we show the correctness of the test (T1') by establishing that it
is equivalent with the test (T1) for a specific canonical extension function
$M$. Towards this end, let us use the assumed canonical labeling map 
$\kappa:\Omega_j\to\Gamma$ to build a canonical extension 
function $M$ using the template of 
Lemma~\ref{lem:canonical-extension-from-canonical-labeling}. In particular,
given an $X\in\Omega_j$ as input with $\underline{X}=U_{j-1}\cup\{p\}$, 
first construct the canonical form $Z=X^{\kappa(X)}$. In accordance 
with (T1'), select the minimum $q\in U$ 
such that $q^{\kappa(X)^{-1}\nu(p)}\in u_j^{\Aut(U_{j})}$. 
Now construct $M(X)$ from $X$ by 
deleting the value of $q^{\kappa(X)^{-1}}$.

\begin{lemma}
\label{lem:t1p-correctness}
The mapping $X\mapsto M(X)$ is well-defined and satisfies
both (M1) and (M2). 
\end{lemma}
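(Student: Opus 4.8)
The plan is to recognize the displayed construction as an instance of the template of Lemma~\ref{lem:canonical-extension-from-canonical-labeling}: there $\kappa$ is used to pass to the canonical form $Z=X^{\kappa(X)}$, a seed $T$ is selected \emph{from $Z$ alone} with $\erel{Z}{T}$, and one sets $M(X)=T^{\kappa(X)^{-1}}$. So the task reduces to three checks: (a) the variable whose value the recipe deletes from $X$ corresponds, under $\kappa(X)$, to a variable $q\in\underline Z$ that is a function of $Z$ only; (b) the resulting $T:=Z|_{\underline Z\setminus\{q\}}$ satisfies $\erel{Z}{T}$; and (c) the recipe indeed outputs $T^{\kappa(X)^{-1}}$. (The recipe is stated for $X$ with $\underline X=U_{j-1}\cup\{p\}$ and $p\in u_j^{\Aut(U_{j-1})}$, and by Lemma~\ref{lem:extends-set} these are exactly the objects extending a normalized seed, so this is the relevant domain.)

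For the heart of the matter, (a), I would put $\tau=\kappa(X)^{-1}\nu(p)$ and first record, using $\nu(p)\in\Aut(U_{j-1})$ and $p^{\nu(p)}=u_j$, that $U_j^{\nu(p)^{-1}}=U_{j-1}\cup\{p\}=\underline X$, hence $U_j^{\tau^{-1}}=\underline X^{\kappa(X)}=\underline Z$. Test (T1') picks the minimum $q\in U$ (in the fixed ordering of $U$) with $q^{\tau}\in u_j^{\Aut(U_j)}$, i.e.\ the minimum element of $(u_j^{\Aut(U_j)})^{\tau^{-1}}$. The one place that needs care is that $\nu(p)$ does \emph{not} in general lie in $\Aut(U_j)$, so it cannot simply be pulled through the orbit $u_j^{\Aut(U_j)}$; instead one observes that $\{\delta\in\Gamma:U_j^{\delta}=\underline Z\}$ is a single right coset of $\Aut(U_j)$, that it contains $\tau^{-1}$ as well as $\gamma^{-1}$ for every $\gamma$ with $\underline Z^{\gamma}=U_j$, and that $u_j^{\Aut(U_j)}$ is $\Aut(U_j)$-invariant; hence $(u_j^{\Aut(U_j)})^{\delta}$ is the same set for all $\delta$ in that coset, namely $\{u_j^{\gamma^{-1}}:\gamma\in\Gamma,\ \underline Z^{\gamma}=U_j\}$. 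This set, and so its minimum $q$ and the seed $T$, depends on $Z$ only; moreover $q\in\underline Z$, so $\underline T=\underline Z\setminus\{q\}$ has size $j-1$.

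For (b), choose $\gamma\in\Gamma$ with $\underline Z^{\gamma}=U_j$ and $q^{\gamma}=u_j$ (possible since $q$ lies in the set just described); then $\underline T^{\gamma}=U_j\setminus\{u_j\}=U_{j-1}$, and the restriction/action identity $(Z|_A)^{\gamma}=Z^{\gamma}|_{A^{\gamma}}$ (immediate from \eqref{eq:assignment-action}) gives $T^{\gamma}=Z^{\gamma}|_{U_{j-1}}$, which is exactly condition \eqref{eq:extends-relation} for $\erel{Z}{T}$; in particular $T\in\Omega_{j-1}$. For (c), the same identity with $\gamma=\kappa(X)^{-1}$ gives $T^{\kappa(X)^{-1}}=Z^{\kappa(X)^{-1}}|_{(\underline Z\setminus\{q\})^{\kappa(X)^{-1}}}=X|_{\underline X\setminus\{q^{\kappa(X)^{-1}}\}}$, which is precisely the assignment obtained from $X$ by deleting the value of $q^{\kappa(X)^{-1}}$. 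Since everything is determined by $X$ and $M(X)\in\Omega_{j-1}$, the map $X\mapsto M(X)$ is well-defined, and Lemma~\ref{lem:canonical-extension-from-canonical-labeling} then yields (M1) and (M2).

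I expect step (a) to be the main obstacle: the choice of $q$ in (T1') is written through $\kappa(X)$ and $\nu(p)$, neither of which is canonical on its own, and one has to argue that their combined action on the orbit $u_j^{\Aut(U_j)}$ is determined by the canonical form $Z$. Once the coset observation is in place, (b) and (c) are routine manipulations with \eqref{eq:assignment-action} and \eqref{eq:extends-relation}.
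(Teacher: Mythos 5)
Your proposal is correct and follows essentially the same route as the paper: both recognize the construction as an instance of the template of Lemma~\ref{lem:canonical-extension-from-canonical-labeling} and both rest on the observation that the elements carrying $U_j$ to $\underline{Z}$ form a single coset of $\Aut(U_j)$ (the paper phrases this as $\Aut(\underline{Z})^{\kappa(X)^{-1}\nu(p)}=\Aut(U_j)$), so that $q$, and hence the seed, is determined by $Z$ alone. The only cosmetic difference is that you verify $\erel{Z}{T}$ and let the template transport (M1) back to $(X,M(X))$, whereas the paper exhibits the witness $\gamma=\nu(p)\alpha$ for $\erel{X}{M(X)}$ directly.
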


\begin{proof}
From \eqref{eq:homomorphism} we have both 
$\Aut(Z)\leq\Aut(\underline{Z})$ and 
$\underline{Z}^{\kappa(X)^{-1}\nu(p)}=U_j$. Thus,
\[
\Aut(Z)^{\kappa(X)^{-1}\nu(p)}
\leq\Aut(\underline{Z})^{\kappa(X)^{-1}\nu(p)}
=\Aut(U_j)\,.
\]
It follows that the choice of $q$ depends on $Z$ and $u_j$ but not on the 
choices of $\kappa(X)$ or $\nu(p)$. Furthermore, we observe that 
$q\in\underline{Z}$ and $q^{\kappa(X)^{-1}}\in\underline{X}$.
Thus, the construction of $M(X)$ is well-defined and (M2) holds by 
Lemma~\ref{lem:canonical-extension-from-canonical-labeling}. 

To verify (M1), observe that since
$q^{\kappa(X)^{-1}\nu(p)}\in u_j^{\Aut(U_{j})}$, there exists
an $\alpha\in\Aut(U_{j})$ with $q^{\kappa(X)^{-1}\nu(p)\alpha}=u_j$.
Thus, for $\gamma=\nu(p)\alpha$ we have
$\underline{X}^\gamma
=(U_{j-1}\cup\{p\})^{\nu(p)\alpha}
=U_j^\alpha=U_j$,
$\underline{M(X)}^\gamma
=(U_j\setminus\{q^{\kappa(X)^{-1}\nu(p)}\})^\alpha
=U_{j-1}$,
and 
$X^\gamma|_{U_{j-1}}=M(X)^{\gamma}$. 
Thus, from \eqref{eq:extends-relation} we have $\erel{X}{M(X)}$ and thus (M1) holds.
\end{proof}

To complete the equivalence between (T1') and (T1), 
observe that since $X$ and $p$ determine $S$ 
by $X|_{\underline{X}\setminus\{p\}}=S$, 
and similarly $X$ and $q^{\kappa(X)^{-1}}$
determine $M(X)$ by $X|_{\underline{X}\setminus\{q^{\kappa(X)^{-1}}\}}=M(X)$,
the test (T1) is equivalent to testing whether 
$(X,p)\cong(X,q^{\kappa(X)^{-1}})$ holds, that is, whether
$p\cong_{\Aut(X)} q^{\kappa(X)^{-1}}$ holds. Observe that this is exactly
the test (T1'). 

It remains to establish the equivalence of (T2') and (T2). We start with
a lemma that captures the $\Aut(S)$-orbits considered by (T2).  
\begin{lemma}
\label{lem:orbit-correspondence}
For a normalized $S\in\Omega_{j-1}$ 
the orbits in $e(S)/\Aut(S)$ are in a one-to-one
correspondence with the elements of 
$(u_j^{\Aut(U_{j-1})}/\Aut(S))\times R$.
\end{lemma}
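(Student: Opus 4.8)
The plan is to exhibit an explicit bijection between $e(S)/\Aut(S)$ and $(u_j^{\Aut(U_{j-1})}/\Aut(S))\times R$ and check it is well-defined and invertible. By Lemma~\ref{lem:extends-set}, since $S$ is normalized, every $X\in e(S)$ is determined uniquely by a pair $(p,r)$ where $p\in u_j^{\Aut(U_{j-1})}$, $\underline{X}=U_{j-1}\cup\{p\}$, $X|_{U_{j-1}}=S$, and $r=X(p)\in R$; conversely every such pair yields a member of $e(S)$. So as sets $e(S)$ is in bijection with $u_j^{\Aut(U_{j-1})}\times R$, and the task reduces to checking that the $\Aut(S)$-action on $e(S)$ corresponds, under this bijection, to the action on $u_j^{\Aut(U_{j-1})}\times R$ that acts on the first coordinate by the natural (restricted) action of $\Aut(S)$ and trivially on the second coordinate.

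First I would fix $\alpha\in\Aut(S)$ and an $X\in e(S)$ with associated pair $(p,r)$, and compute the pair associated to $X^\alpha$. Since $S$ is normalized, $\Aut(S)\leq\Aut(U_{j-1})$, so $\alpha$ fixes $U_{j-1}$ setwise; hence $\underline{X^\alpha}=\underline{X}^\alpha=(U_{j-1}\cup\{p\})^\alpha=U_{j-1}\cup\{p^\alpha\}$ and $p^\alpha\in u_j^{\Aut(U_{j-1})}$ because $\alpha\in\Aut(U_{j-1})$. Moreover $X^\alpha|_{U_{j-1}}=(X|_{U_{j-1}})^\alpha=S^\alpha=S$, so $X^\alpha\in e(S)$ by Lemma~\ref{lem:extends-set}. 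Finally, using \eqref{eq:assignment-action}, $X^\alpha(p^\alpha)=X((p^\alpha)^{\alpha^{-1}})=X(p)=r$. Thus $X^\alpha$ corresponds to $(p^\alpha,r)$, confirming the action is diagonal with trivial second coordinate. Consequently $X\cong_{\Aut(S)}X'$ if and only if their pairs $(p,r),(p',r')$ satisfy $r=r'$ and $p\cong_{\Aut(S)}p'$, which is exactly the statement that the orbit sets $e(S)/\Aut(S)$ and $(u_j^{\Aut(U_{j-1})}/\Aut(S))\times R$ are in one-to-one correspondence via $X^{\Aut(S)}\mapsto (p^{\Aut(S)},r)$. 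I would close by remarking the inverse map sends $(p^{\Aut(S)},r)$ to the $\Aut(S)$-orbit of the unique $X\in e(S)$ with $X(p)=r$ and $X|_{U_{j-1}}=S$, and that well-definedness of both directions is immediate from the action computation just performed.

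There is no serious obstacle here; the proof is essentially a bookkeeping exercise once Lemma~\ref{lem:extends-set} is invoked. The one point requiring a little care—and the closest thing to a ``hard part''—is making sure the $\Aut(S)$-action really is trivial on the $R$-coordinate, i.e.\ that acting by $\alpha$ relabels the extending variable from $p$ to $p^\alpha$ but leaves the assigned value intact; this is exactly where the defining formula \eqref{eq:assignment-action} for the action on partial assignments, together with normalization of $S$ guaranteeing $\alpha\in\Aut(U_{j-1})$, must be used explicitly rather than hand-waved.
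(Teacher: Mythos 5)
Your proof is correct and follows essentially the same route as the paper's: invoke Lemma~\ref{lem:extends-set} to identify $e(S)$ with $u_j^{\Aut(U_{j-1})}\times R$, use normalization to get $\Aut(S)\leq\Aut(U_{j-1})$, and observe via \eqref{eq:assignment-action} that the action relabels the extending variable while fixing the assigned value. The paper's proof is just a more compressed version of the same computation, so there is nothing to change.
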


\begin{proof}
From \eqref{eq:assignment-action} we have 
$\Aut(S)\leq\Aut(\underline{S})=\Aut(U_{j-1})$ since $S$ is normalized.
Furthermore, Lemma~\ref{lem:extends-set}
implies that every extension $X\in e(S)$ is uniquely determined by 
the variable $p\in u_j^{\Aut(U_{j-1})}\cap\underline{X}$ and the 
value $X(p)\in R$. Since the action~\eqref{eq:assignment-action} 
fixes the values in $R$ elementwise, for any $X,X'\in e(S)$ 
we have $X\cong_{\Aut(S)}X'$ if and only if both $p\cong_{\Aut(S)} p'$ 
and $X(p)=X'(p')$. The lemma follows.
\end{proof}

Now order the elements $X\in e(S)$ based on the lexicographic
ordering of the pairs $(p,X(p))\in u_j^{\Aut(U_{j-1})}\times R$. 
Since the action \eqref{eq:assignment-action} 
fixes the values in $R$ elementwise, we have that (T2') holds 
if and only if (T2) holds for this ordering of $e(S)$.
The correctness of procedure (P') equipped with
the tests (T1') and (T2') now follows from Lemma~\ref{lem:mckay-full}.

\subsection{Selecting a prefix} This section gives a brief discussion
on how to select the prefix.
Let $U_k=\{u_1,u_2,\ldots,u_k\}$ be the set of variables in the 
prefix sequence. It is immediate that there exist $|R|^k$ distinct 
partial assignments from $U_k$ to $R$. Let us write $R^{U_k}$ for the set of
these assignments. The group $\Gamma$ now partitions $R^{U_k}$ 
into orbits via the action \eqref{eq:assignment-action},
and it suffices to consider at most one representative from 
each orbit to obtain an exhaustive traversal of the search space, up to 
isomorphism. Our goal is thus to select the prefix $U_k$ so that the setwise 
stabilizer $\Gamma_{U_k}$ has comparatively few orbits on $R^{U_k}$ 
compared with the total number of such assignments. 
In particular, the ratio of the number of orbits $|R^{U_k}/\Gamma_{U_k}|$
to the total number of mappings $|R|^k$ can be viewed as a proxy for 
the achieved symmetry reduction and as a rough%
\footnote{%
Here it should be noted that executing the symmetry reduction carries in 
itself a nontrivial computational cost. That is, there is a tradeoff between
the potential savings in solving the system gained by symmetry reduction 
versus the cost of performing symmetry reduction. For example, if the 
instance has no symmetry and $\Gamma$ is a trivial group, then executing 
symmetry reduction merely makes it more costly to solve the system.}{}
proxy for the speedup factor obtained compared with no symmetry reduction 
at all. 

\subsection{Subroutines}
By our assumption, the canonical labeling map $\kappa$ produces as a
side-effect a set of generators for the automorphism group $\Aut(X)$
for a given input $X$. We also assume that generators for the groups
$\Aut(U_j)$ for $j=0,1,\ldots,k$ can be precomputed by similar means.
This makes the canonical labeling map essentially the only nontrivial
subroutine needed to implement procedure (P'). Indeed, the orbit
computations required by tests (T1') and (T2') are implementable by
elementary permutation group
algorithms~\cite{Butler:1991,Seress:2003}. Section~\ref{sect:colored-graphs}
describes how to implement $\kappa$ by reduction to vertex-colored
graphs.%
\footnote{Reduction to vertex-colored graphs is by no means the only possibility to obtain the canonical labeling map to enable (P'), (T1'), and (T2'). Another possibility would be to represent $\Gamma$ directly as a permutation group and use dedicated permutation-group algorithms~\cite{Leon:1991,Leon:1997}. Our present choice of vertex-colored graphs is motivated by easy availability of carefully engineered implementations for working with vertex-colored graphs.}{}

\section{Value symmetries}
\label{sect:value-symmetry}

The previous section considered prefix-assignment generation subject to
an action of a group $\Gamma$ on the set of variables $U$. In this section,
we extend the framework so that it captures symmetries in values
assigned to variables, or {\em value symmetries}. Towards this end, we 
extend the domain that records the symmetries from $U$ to $U\times R$, 
where $R$ is the set of values that can be assigned to the variables in $U$.
Accordingly, in what follows we assume that the group $\Gamma$ acts 
on $U\times R$ as well as on $U$, the latter by restriction. 

The action of the group $\Gamma$ on $U\times R$ may not be completely 
arbitrary, however, because we want partial assignments $X:W\rightarrow R$
with $W\subseteq U$ to remain well-defined functions under the action 
of $\Gamma$. This property is naturally captured by 
the \emph{wreath product} group $\Sym(R)\wr \Sym(U)$ and its natural action 
on $U\times R$. 

\subsection{The wreath product and its actions}

We will follow the convention that $\Sym(R)\wr \Sym(U)$
acts on $U\times R$ by first acting on $U$ and then on $R$. 
For accessibility and convenience, we review our conventions in detail. 
The group $\Sym(R)\wr\Sym(U)$ consists of all pairs $(\pi,\sigma)$, 
where $\pi\in\Sym(U)$ is a permutation of $U$ and 
$\sigma:U\rightarrow\Sym(R)$ associates a permutation $\sigma(u)\in\Sym(R)$ 
with each element $u\in U$.
In particular, $\Sym(R)\wr\Sym(U)$ has order $|U|!\cdot(|R|!)^{|U|}$.

The product of two elements
$(\pi_1,\sigma_1),(\pi_2,\sigma_2)\in\Sym(R)\wr\Sym(U)$
is defined by $(\pi,\sigma)=(\pi_1,\sigma_1)(\pi_2,\sigma_2)$, where
\begin{equation}
\label{eq:wr-operation-pi}
\pi=\pi_1\pi_2
\end{equation}
and for all $u\in U$ we set
\begin{equation}
\label{eq:wr-operation-sigma}
\sigma(u)=\sigma_1(u^{\pi_2^{-1}})\sigma_2(u)\,.
\end{equation}
The inverse of an element $(\pi,\sigma)\in\Sym(R)\wr\Sym(U)$ is thus given by
$(\pi,\sigma)^{-1}=(\rho,\tau)$, where 
\begin{equation}
\label{eq:wr-inverse-pi}
\rho=\pi^{-1}
\end{equation}
and for all $u\in U$ we have
\begin{equation}
\label{eq:wr-inverse-sigma}
\tau(u)=\sigma(u^{\pi})^{-1}\,.
\end{equation}

An element $(\pi,\sigma)\in\Sym(R)\wr\Sym(U)$ acts on an element
$u\in U$ by
\begin{equation}
\label{eq:wr-action-u}
u^{(\pi,\sigma)}=u^\pi
\end{equation}
and on a pair $(u,r)\in U\times R$ by
\begin{equation}
\label{eq:wr-action-ur}
(u,r)^{(\pi,\sigma)}=(u^\pi,r^{\sigma(u^{\pi})})\,.
\end{equation}
Here in particular the intuition is that we first act on $(u,r)$ with
$\pi$ to obtain $(u^\pi,r)$, and then act with $\sigma(u^{\pi})$ 
to obtain $(u^\pi,r^{\sigma(u^{\pi})})$. Extend the 
action \eqref{eq:wr-action-u} elementwise to subsets $W\subseteq U$.

\subsection{Partial assignments and isomorphism}

Let $\Gamma$ be a subgroup of $\Sym(R)\wr\Sym(U)$ and let $\Gamma$ act on $U$ 
and $U\times R$ by \eqref{eq:wr-action-u} and \eqref{eq:wr-action-ur},
respectively. Furthermore, we let an element $\gamma=(\pi,\sigma)\in\Gamma$ 
act on a partial assignment $X:W\rightarrow R$ with $W\subseteq U$ to 
produce the partial assignment $X^\gamma:W^\pi\rightarrow R$ 
defined for all $u\in W^\pi$ by
\begin{equation}
\label{eq:value-assignment-action}
X^\gamma(u) = X(u^{\pi^{-1}})^{\sigma(u)}\,.
\end{equation}

In analogy with Lemma~\ref{lem:action}, let us verify that the value-permuting
action~\eqref{eq:value-assignment-action} is well-defined.
\begin{lemma}
The action~\eqref{eq:value-assignment-action} is well defined.
\end{lemma}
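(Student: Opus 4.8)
The plan is to mimic the proof of Lemma~\ref{lem:action} exactly, checking the two defining properties of a group action: that the identity of $\Gamma$ acts trivially, and that the action is compatible with the group operation~\eqref{eq:wr-operation-pi}--\eqref{eq:wr-operation-sigma}. Along the way one must also verify the implicit well-definedness claim, namely that $X^\gamma$ as specified by~\eqref{eq:value-assignment-action} really is a function on $W^\pi$: given $u\in W^\pi$ there is a unique $v\in W$ with $v^\pi=u$, namely $v=u^{\pi^{-1}}$, so $X(u^{\pi^{-1}})$ is defined and lies in $R$, and then $X(u^{\pi^{-1}})^{\sigma(u)}\in R$ since $\sigma(u)\in\Sym(R)$; hence $X^\gamma:W^\pi\to R$ is a well-defined partial assignment with $\underline{X^\gamma}=W^\pi=\underline{X}^\gamma$.

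For the identity: the identity element of $\Sym(R)\wr\Sym(U)$ is $\epsilon=(\mathrm{id}_U,\sigma_\epsilon)$ where $\sigma_\epsilon(u)$ is the identity of $\Sym(R)$ for every $u$, and then~\eqref{eq:value-assignment-action} gives $X^\epsilon(u)=X(u)^{\sigma_\epsilon(u)}=X(u)$ for all $u\in W$, so $X^\epsilon=X$. For compatibility, fix $\gamma=(\pi,\sigma)$ and $\beta=(\varpi,\varsigma)$ in $\Gamma$ and let $(\pi',\sigma')=\gamma\beta$, so by~\eqref{eq:wr-operation-pi}--\eqref{eq:wr-operation-sigma} we have $\pi'=\pi\varpi$ and $\sigma'(u)=\sigma(u^{\varpi^{-1}})\varsigma(u)$. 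One must show $X^{\gamma\beta}=(X^\gamma)^\beta$; both sides are functions on $W^{\pi\varpi}=(W^\pi)^\varpi$, so it suffices to evaluate at an arbitrary $u$ in this set. Writing out the left side gives $X^{\gamma\beta}(u)=X(u^{(\pi\varpi)^{-1}})^{\sigma'(u)}=X(u^{\varpi^{-1}\pi^{-1}})^{\sigma(u^{\varpi^{-1}})\varsigma(u)}$, and for the right side, applying~\eqref{eq:value-assignment-action} twice gives $(X^\gamma)^\beta(u)=X^\gamma(u^{\varpi^{-1}})^{\varsigma(u)}=\bigl(X((u^{\varpi^{-1}})^{\pi^{-1}})^{\sigma(u^{\varpi^{-1}})}\bigr)^{\varsigma(u)}$, and these agree because $r^{\sigma(u^{\varpi^{-1}})\varsigma(u)}=(r^{\sigma(u^{\varpi^{-1}})})^{\varsigma(u)}$ by the (right) action of $\Sym(R)$ on $R$, and $(u^{\varpi^{-1}})^{\pi^{-1}}=u^{\varpi^{-1}\pi^{-1}}=u^{(\pi\varpi)^{-1}}$.

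The only mild subtlety, and the one place to be careful, is keeping the bookkeeping of which permutation of $R$ is applied at which point of $U$ consistent with the convention encoded in~\eqref{eq:wr-operation-sigma}: the argument $u^{\pi_2^{-1}}$ appearing there is exactly what makes the $\sigma$-components compose correctly after the $\pi$-components have been permuted, so when we evaluate $\sigma'(u)$ at the point $u\in W^{\pi\varpi}$ we must track the point back through $\varpi$ before applying $\sigma$, which is precisely what the computation above does. I would present this as a single displayed chain of equalities analogous to the one in the proof of Lemma~\ref{lem:action}, annotated to indicate where~\eqref{eq:wr-operation-sigma}, \eqref{eq:value-assignment-action}, and the associativity of the $\Sym(R)$-action on $R$ are used, and I expect no real obstacle beyond getting these indices right.
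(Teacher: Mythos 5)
Your proof is correct and follows essentially the same route as the paper's: verify that the identity acts trivially and then check compatibility with the group operation by evaluating $X^{\gamma\beta}$ and $(X^\gamma)^\beta$ at an arbitrary point and matching terms via~\eqref{eq:wr-operation-pi} and~\eqref{eq:wr-operation-sigma}. The additional remark that $X^\gamma$ is a genuine function $W^\pi\to R$ is a harmless (and correct) elaboration that the paper leaves implicit.
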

\begin{proof}
We observe that for the identity 
$\epsilon\in\Gamma$ of $\Gamma\leq\Sym(R)\wr\Sym(U)$, we
have $X^\epsilon=X$. 
Furthermore, for all $\gamma_1=(\pi_1,\sigma_1)\in\Gamma$, $\gamma_2=(\pi_2,\sigma_2)\in \Gamma$, and 
$u\in W^{\gamma_1\gamma_2}=(W^{\gamma_1})^{\gamma_2}$, 
by \eqref{eq:value-assignment-action}, 
\eqref{eq:wr-operation-pi}, and
\eqref{eq:wr-operation-sigma}, we have
\begin{align*}
X^{\gamma_1\gamma_2}(u) 
& = X(u^{(\pi_1\pi_2)^{-1}})^{\sigma_1(u^{\pi_2^{-1}})\sigma_2(u)}
  = X(u^{\pi_2^{-1}\pi_1^{-1}})^{\sigma_1(u^{\pi_2^{-1}})\sigma_2(u)} \\
& = X^{\gamma_1}(u^{\pi_2^{-1}})^{\sigma_2(u)}
  = (X^{\gamma_1})^{\gamma_2}(u) \,.\qedhere
\end{align*}
\end{proof}

Let us recall that for $X:W\rightarrow R$ we write $\underline{X}=W$ for
the underlying set of variables assigned by $X$. In analogy
with Section~\ref{sect:partial}, the underline map is a homomorphism
of group actions that satisfies \eqref{eq:homomorphism} for the action
\eqref{eq:value-assignment-action} and the action \eqref{eq:wr-action-u}
extended elementwise to subsets of $U$.
Isomorphism for partial assignments is now induced by the 
action~\eqref{eq:value-assignment-action}.

\subsection{Generating normalized assignments}

Working with the group action~\eqref{eq:value-assignment-action}, 
let $u_1,u_2,\ldots,u_k$ be $k$ distinct elements of $U$, and let
$U_j=\{u_1,u_2,\ldots,u_j\}$ for $j=0,1,\ldots,k$. Let $\Omega_j$
consist of all partial assignments $X:W\rightarrow R$ with
$W\cong U_j$. We construct exactly one object form each orbit of
$\Gamma$ on $\Omega_j$, using as seeds exactly one object from each
orbit of $\Gamma$ on $\Omega_{j-1}$, for each $j=1,2,\ldots,k$,
assuming the availability of canonical labeling maps
$\kappa:\Omega_j\rightarrow\Gamma$. We say the assignment
$X\in\Omega_j$ is {\em normalized} if $\underline{X}=U_j$.

We now present a version of the procedure (P') modified for the
group action~\eqref{eq:value-assignment-action}.
Let us fix $j=1,2,\ldots,k$. We assume that
the procedure is invoked for exactly one normalized representative
$S\in\Omega_{j-1}$ from each orbit in $\Omega_{j-1}/\Gamma$.
\NamedIndent{(P'')}{ Let a normalized $S\in\Omega_{j-1}$ be given as
  input.  For each 
  $p\in u_j^{\Aut(U_{j-1})}$ and each $r\in R$, construct the
  assignment $X:U_{j-1}\cup\{p\}\rightarrow R$ defined by $X(p)=r$ and
  $X(u)=S(u)$ for all $u\in U_{j-1}$. Perform the isomorph rejection
  tests (T1') and (T2'') on $X$ and $S$. If both tests accept, visit
  $X^{\nu(p)}$ where $\nu(p)\in\Aut(U_{j-1})$ normalizes $X$.  
}
In particular, procedure (P'') has two differences compared with
procedure (P'). First, the underlying group action 
is \eqref{eq:value-assignment-action}. Second, the test (T2') has been
replaced with a new test (T2'') to account for more extensive orbits 
of pairs $(p,r)$ under the action of $\Aut(S)$.

\subsection{The isomorph rejection tests}

Assume that the elements of $U$, $R$, and $U\times R$ have been arbitrarily 
ordered and that $\kappa:\Omega_j\rightarrow\Gamma$ is a canonical labeling
map.  Suppose that $X$ has been constructed by extending a normalized
$S$ with $\underline{X}=\underline{S}\cup\{p\}=U_{j-1}\cup\{p\}$
and $X(p)=r$. 
Let us first recall the test (T1') for convenience:
\NamedIndent{(T1')}{
Subject to the ordering of $U$, select the minimum $q\in U$ 
such that $q^{\kappa(X)^{-1}\nu(p)}\in u_j^{\Aut(U_j)}$.
Accept if and only if $p\cong_{\Aut(X)} q^{\kappa^{-1}(X)}$.
}
The new isomorph rejection test is as follows:
\NamedIndent{(T2'')}{
Accept if and only if $(p,r)=\min \, (p,r)^{\Aut(S)}$ subject to
the ordering of $U\times R$.
}

\subsection{Correctness}

We now establish the correctness of the modified procedure (P''). Fix
$j=1,2,\ldots,k$. Define the extension relation
$e\subseteq\Omega_j\times\Omega_{j-1}$ as in
\eqref{eq:extends-relation}.  This relation is well-defined in the
context of McKay's framework under the modified group action.
\begin{lemma}
\label{lem:extends-val}
The relation \eqref{eq:extends-relation} satisfies (E1) and (E2) when
the group action is as defined
in~\eqref{eq:value-assignment-action}.
\end{lemma}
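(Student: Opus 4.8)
The plan is to imitate the proof of Lemma~\ref{lem:extends} essentially verbatim, checking only that each step remains valid when the action $\gamma=(\pi,\sigma)$ permutes values as well as variables. The key observation that makes this possible is that the proof of Lemma~\ref{lem:extends} uses only two structural facts about the group action: first, that the underline map is a homomorphism of group actions, i.e.\ \eqref{eq:homomorphism} holds; and second, that acting by $\gamma$ commutes with restriction to a subset of variables that is itself stabilized appropriately. The first fact is explicitly asserted in the text immediately after \eqref{eq:value-assignment-action} for the new action. The second fact holds because \eqref{eq:value-assignment-action} defines $X^\gamma(u)$ pointwise in terms of the single value $X(u^{\pi^{-1}})$, so $(X^\gamma)|_{U_{j-1}}$ depends only on $X|_{(U_{j-1})^{\pi^{-1}}}$ and the restriction of $\sigma$; concretely, if $\underline{S}^\gamma = U_{j-1}$ then $(X^\gamma|_{U_{j-1}}) = (X|_{U_{j-1}})^{\gamma}$ wherever both sides make sense.

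Concretely, for (E1) I would fix $\beta=(\pi_\beta,\sigma_\beta)\in\Gamma$ and $X\in\Omega_j$, $S\in\Omega_{j-1}$, and show $\erel{X}{S}\iff\erel{X^\beta}{S^\beta}$. Using \eqref{eq:homomorphism} for the new action, the condition $\underline{X}^\gamma=U_j$ is equivalent to $\underline{X^\beta}^{\,\beta^{-1}\gamma}=U_j$, and likewise for $\underline{S}$ and $U_{j-1}$; this is the same reindexing $\gamma\mapsto\beta^{-1}\gamma$ of the witness as in Lemma~\ref{lem:extends}. For the third clause, given $\gamma$ with $\underline{X}^\gamma=U_j$ and $\underline{S}^\gamma=U_{j-1}$, I would check $X^\gamma|_{U_{j-1}}=S^\gamma$ iff $(X^\beta)^{\beta^{-1}\gamma}|_{U_{j-1}}=(S^\beta)^{\beta^{-1}\gamma}$, which is immediate once one writes $X^\gamma=(X^\beta)^{\beta^{-1}\gamma}$ via the (already established) associativity of the action \eqref{eq:value-assignment-action}. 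For (E2), given $X\in\Omega_j$ pick $\gamma\in\Gamma$ with $\underline{X}^\gamma=U_j$ (possible since $\underline{X}\cong U_j$ under the action \eqref{eq:wr-action-u}), form $Y=X^\gamma$, let $T$ be $Y$ with the assignment to $u_j$ deleted, and set $S=T^{\gamma^{-1}}$; then $\erel{X}{S}$ holds with witness $\gamma$ by construction.

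I do not expect a genuine obstacle here: the lemma is deliberately the ``value-symmetry analogue'' of Lemma~\ref{lem:extends}, and the only thing to be careful about is not to accidentally use any property of \eqref{eq:assignment-action} beyond what \eqref{eq:value-assignment-action} also satisfies. The one point deserving a sentence of justification is that $\underline{X}\cong U_j$ really does mean there is $\gamma\in\Gamma$ with $\underline{X}^\gamma=U_j$ under the variable action \eqref{eq:wr-action-u} (so that (E2)'s witness exists), but this is just the definition of $\Omega_j$ together with the fact that $\Gamma$ acts on $U$ by \eqref{eq:wr-action-u}. So the proof will read: ``The proof is identical to that of Lemma~\ref{lem:extends}, using \eqref{eq:homomorphism} for the action \eqref{eq:value-assignment-action} in place of the action \eqref{eq:assignment-action}, and noting that acting by $\gamma\in\Gamma$ commutes with restriction to $U_{j-1}$ whenever $\underline{S}^\gamma=U_{j-1}$,'' followed by the explicit reindexing of the witness as above for anyone who wants the details.
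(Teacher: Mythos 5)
Your proposal is correct and matches the paper's proof, which simply states that the argument is identical to that of Lemma~\ref{lem:extends} because \eqref{eq:homomorphism} continues to hold for the action \eqref{eq:value-assignment-action}. Your additional remark that the action commutes with restriction is a point the paper also relies on (it verifies the analogue of \eqref{eq:act-restrict} separately in Lemma~\ref{lem:extends-set-val}), so spelling it out is harmless and accurate.
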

\begin{proof}
Identical to Lemma~\ref{lem:extends} since \eqref{eq:homomorphism} holds
for the action \eqref{eq:value-assignment-action}
and the action \eqref{eq:wr-action-u} extended elementwise to subsets of $U$.
\end{proof}

The correctness analysis of the test (T1') proceeds identically as in 
Section~\ref{sect:partial}, relying on \eqref{eq:homomorphism} 
in the proof of Lemma~\ref{lem:t1p-correctness}.
To establish the correctness of the new test (T2''), we first observe
that the counterpart of Lemma~\ref{lem:extends-set} holds for the
modified group action.

\begin{lemma}
\label{lem:extends-set-val}
Let $S\in\Omega_{j-1}$ be normalized. For all $X\in\Omega_j$, we have
$\erel{X}{S}$ if and only if there exists a 
$p\in u_j^{\Aut(U_{j-1})}$ with 
$\underline{X}=U_{j-1}\cup\{p\}$ and $X|_{U_{j-1}}=S$.
\end{lemma}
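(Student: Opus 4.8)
The plan is to mirror the proof of Lemma~\ref{lem:extends-set} essentially verbatim, since the only change is the underlying group action, and the relevant structural facts carry over. First I would unfold the definition~\eqref{eq:extends-relation} of $\erel{X}{S}$: there exists $\gamma=(\pi,\sigma)\in\Gamma$ with $\underline{X}^\gamma=U_j$, $\underline{S}^\gamma=U_{j-1}$, and $X^\gamma|_{U_{j-1}}=S^\gamma$. Because $S$ is normalized, $\underline{S}=U_{j-1}$, so the condition $\underline{S}^\gamma=U_{j-1}$ forces $U_{j-1}^\gamma=U_{j-1}$, i.e.\ $\gamma\in\Aut(U_{j-1})$ (here the action on $U$ is \eqref{eq:wr-action-u}, so ``$\gamma\in\Aut(U_{j-1})$'' means $U_{j-1}^\pi=U_{j-1}$).

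Next I would run the same restriction computation as in~\eqref{eq:act-restrict}: from $\gamma\in\Aut(U_{j-1})$ we get
\[
X|_{U_{j-1}}
=(X^{\gamma})^{\gamma^{-1}}|_{U_{j-1}}
=(X^\gamma|_{U_{j-1}^{\gamma^{-1}}})^{\gamma^{-1}}
=(X^\gamma|_{U_{j-1}})^{\gamma^{-1}}
=(S^\gamma)^{\gamma^{-1}}
=S\,,
\]
using that the underline map is a homomorphism of group actions for \eqref{eq:value-assignment-action} (so restriction commutes with the action in the appropriate sense) together with $X^\epsilon=X$. For the ``only if'' direction, set $p=u_j^{\gamma^{-1}}$; then $p\in U_{j-1}^{\gamma^{-1}}\cdots$ — more precisely $p=u_j^{\gamma^{-1}}\in u_j^{\Aut(U_{j-1})}$ since $\gamma^{-1}\in\Aut(U_{j-1})$, and $\underline{X}=\underline{X}^{\gamma\gamma^{-1}}=U_j^{\gamma^{-1}}=U_{j-1}^{\gamma^{-1}}\cup\{u_j^{\gamma^{-1}}\}=U_{j-1}\cup\{p\}$, with $X|_{U_{j-1}}=S$ already shown. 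For the ``if'' direction, given $p\in u_j^{\Aut(U_{j-1})}$ pick $\gamma\in\Aut(U_{j-1})$ with $p^\gamma=u_j$; then $\underline{X}^\gamma=(U_{j-1}\cup\{p\})^\gamma=U_{j-1}\cup\{u_j\}=U_j$, $\underline{S}^\gamma=U_{j-1}^\gamma=U_{j-1}$, and $X^\gamma|_{U_{j-1}}=(X|_{U_{j-1}})^\gamma=S^\gamma$, so \eqref{eq:extends-relation} holds.

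I do not expect a genuine obstacle here; the one point that needs a moment's care is that the action \eqref{eq:value-assignment-action} permutes values inside $R$, so a priori one might worry the restriction identity $(X^\gamma|_{U_{j-1}})^{\gamma^{-1}}=X|_{U_{j-1}}$ could be affected. But this is exactly the content of \eqref{eq:homomorphism} holding for the new action (already noted in the excerpt just before the statement of this lemma, and used in Lemma~\ref{lem:extends-val}): the underline map being a homomorphism means $\underline{X}^\gamma=\underline{X^\gamma}$, and together with associativity of the action this gives that restriction to a $\gamma$-invariant set commutes with acting by $\gamma$. So the cleanest write-up is simply: ``The proof is identical to that of Lemma~\ref{lem:extends-set}, using that \eqref{eq:homomorphism} holds for the action \eqref{eq:value-assignment-action},'' optionally reproducing the two-line computation above for the reader's convenience.
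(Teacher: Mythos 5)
Your proposal is correct and matches the paper's proof, which likewise just observes that the restriction identity \eqref{eq:act-restrict} carries over to the action \eqref{eq:value-assignment-action} and then repeats the argument of Lemma~\ref{lem:extends-set}. The extra care you take to justify why the restriction computation survives the value-permuting action is exactly the one point the paper flags, so nothing is missing.
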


\begin{proof}
First observe that \eqref{eq:act-restrict} holds 
for the action \eqref{eq:value-assignment-action}. Then
proceed as in the proof of Lemma~\ref{lem:extends-set}.
\end{proof}

Let us now proceed to the counterpart of Lemma~\ref{lem:orbit-correspondence}.

\begin{lemma}
\label{lem:orbit-correspondence-val}
For a normalized $S\in\Omega_{j-1}$,
the orbits in $e(S)/\Aut(S)$ are in a one-to-one
correspondence with the orbits in 
$(u_j^{\Aut(U_{j-1})}\times R)/\Aut(S)$.
\end{lemma}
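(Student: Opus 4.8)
The plan is to mirror the proof of Lemma~\ref{lem:orbit-correspondence}, adapting it to the richer action~\eqref{eq:value-assignment-action} in which the values in $R$ are permuted rather than fixed. First I would invoke Lemma~\ref{lem:extends-set-val} to get a concrete parametrization of $e(S)$: since $S$ is normalized, every $X\in e(S)$ has $\underline{X}=U_{j-1}\cup\{p\}$ for a unique $p\in u_j^{\Aut(U_{j-1})}$ and $X|_{U_{j-1}}=S$, so $X$ is determined by the pair $(p,r)$ where $r=X(p)\in R$. This gives a bijection between $e(S)$ and $u_j^{\Aut(U_{j-1})}\times R$, sending $X\mapsto(p,X(p))$.

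Next I would check that this bijection is $\Aut(S)$-equivariant, i.e.\ that it intertwines the action~\eqref{eq:value-assignment-action} on $e(S)$ with the action~\eqref{eq:wr-action-ur} on $u_j^{\Aut(U_{j-1})}\times R$. The key point: since $\Aut(S)\leq\Aut(\underline{S})=\Aut(U_{j-1})$ (as $S$ is normalized), for $\gamma=(\pi,\sigma)\in\Aut(S)$ the image $X^\gamma$ still restricts to $S$ on $U_{j-1}$, and its newly assigned variable is $p^\pi$ with value $X^\gamma(p^\pi)=X(p)^{\sigma(p^\pi)}=r^{\sigma(p^\pi)}$ by~\eqref{eq:value-assignment-action}. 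Comparing with~\eqref{eq:wr-action-ur}, which gives $(p,r)^\gamma=(p^\pi,r^{\sigma(p^\pi)})$, we see the two pairs match, so the bijection is equivariant. An equivariant bijection carries orbits to orbits bijectively, which yields the claimed one-to-one correspondence between $e(S)/\Aut(S)$ and $(u_j^{\Aut(U_{j-1})}\times R)/\Aut(S)$.

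The main (and only real) obstacle is the bookkeeping in the equivariance check: one must be careful that the variable attached to $X^\gamma$ is indeed $p^\pi$ and not $p^{\pi^{-1}}$, and that $\sigma$ is evaluated at $p^\pi$ rather than $p$, which requires reading~\eqref{eq:value-assignment-action} at the point $u=p^\pi\in W^\pi$. Beyond that, everything reduces to the observation that an $\Aut(S)$-equivariant bijection descends to a bijection on orbit sets. I would keep the writeup short, citing Lemma~\ref{lem:extends-set-val} for the parametrization and noting that, in contrast to Lemma~\ref{lem:orbit-correspondence} where $R$ was fixed pointwise and the orbit set factored as a product $(u_j^{\Aut(U_{j-1})}/\Aut(S))\times R$, here no such factorization is available because $\Aut(S)$ may genuinely move the values, so the correspondence is with the full orbit set $(u_j^{\Aut(U_{j-1})}\times R)/\Aut(S)$.

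\begin{proof}
By Lemma~\ref{lem:extends-set-val}, since $S$ is normalized, every $X\in e(S)$ satisfies $\underline{X}=U_{j-1}\cup\{p\}$ for a unique $p\in u_j^{\Aut(U_{j-1})}$ together with $X|_{U_{j-1}}=S$; thus $X$ is uniquely determined by the pair $(p,X(p))\in u_j^{\Aut(U_{j-1})}\times R$, and conversely every such pair arises. Hence $X\mapsto(p,X(p))$ is a bijection from $e(S)$ onto $u_j^{\Aut(U_{j-1})}\times R$.

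We claim this bijection is $\Aut(S)$-equivariant with respect to the action~\eqref{eq:value-assignment-action} on $e(S)$ and the action~\eqref{eq:wr-action-ur} on $u_j^{\Aut(U_{j-1})}\times R$. Since $S$ is normalized, by~\eqref{eq:homomorphism} we have $\Aut(S)\leq\Aut(\underline{S})=\Aut(U_{j-1})$, so any $\gamma=(\pi,\sigma)\in\Aut(S)$ fixes $U_{j-1}$ setwise and $S^\gamma=S$. Consequently, for $X\in e(S)$ corresponding to $(p,r)$, the assignment $X^\gamma$ again restricts to $S$ on $U_{j-1}$, its underlying set is $(U_{j-1}\cup\{p\})^\pi=U_{j-1}\cup\{p^\pi\}$, and by~\eqref{eq:value-assignment-action} applied at $u=p^\pi\in\underline{X}^\pi$ we get $X^\gamma(p^\pi)=X(p^{\pi\pi^{-1}})^{\sigma(p^\pi)}=r^{\sigma(p^\pi)}$. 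Therefore $X^\gamma$ corresponds to the pair $(p^\pi,r^{\sigma(p^\pi)})$, which is exactly $(p,r)^\gamma$ by~\eqref{eq:wr-action-ur}. This proves equivariance.

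An $\Aut(S)$-equivariant bijection maps $\Aut(S)$-orbits onto $\Aut(S)$-orbits bijectively, so it induces a one-to-one correspondence between $e(S)/\Aut(S)$ and $(u_j^{\Aut(U_{j-1})}\times R)/\Aut(S)$, as claimed.
\end{proof}
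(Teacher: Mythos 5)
Your proposal is correct and follows essentially the same route as the paper's proof: parametrize $e(S)$ by the pairs $(p,X(p))$ via Lemma~\ref{lem:extends-set-val}, then verify that the map $X\mapsto(p,X(p))$ intertwines the action \eqref{eq:value-assignment-action} with the action \eqref{eq:wr-action-ur}, using $\Aut(S)\leq\Aut(U_{j-1})$ and the computation $X^\gamma(p^\pi)=X(p)^{\sigma(p^\pi)}$. The paper states the intertwining as ``$Y=X^\alpha$ iff $(q,Y(q))=(p,X(p))^\alpha$'' rather than as equivariance of a bijection, but this is only a difference of phrasing.
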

\begin{proof}
Lemma~\ref{lem:extends-set-val} implies that every extension $X\in e(S)$
is uniquely determined by the variable 
$p\in u_j^{\Aut(U_{j-1})}\cap\underline{X}$ and the value $X(p)\in R$.
That is, the elements in $e(S)$ are in one-to-one correspondence with 
elements in $u_j^{\Aut(U_{j-1})}\times R$. 

Let $\Aut(S)$ act on $e(S)$ via \eqref{eq:value-assignment-action}; this 
action is well-defined by Lemma~\ref{lem:extends-val} and (E1) since 
for all $\alpha\in\Aut(S)$ we have $\erel{X}{S}$ if and only 
if $\erel{X^\alpha}{S}$. Let $\Aut(S)$ act on $u_j^{\Aut(U_{j-1})}\times R$
via \eqref{eq:wr-action-ur}; this action is well-defined because
$S$ is normalized and hence 
$\Aut(S)\leq\Aut(\underline{S})=\Aut(U_{j-1})$ holds by \eqref{eq:homomorphism}.

Let $X,Y\in e(S)$ be arbitrary 
with $\underline X = U_{j-1} \cup \{p\}$ and 
$\underline Y = U_{j-1} \cup \{q\}$.
We now claim that $X\cong_{\Aut(S)} Y$ holds under the action
\eqref{eq:value-assignment-action}
if and only if
$(p,X(p)) \cong_{\Aut(S)} (q,Y(q))$ holds under the action 
\eqref{eq:wr-action-ur}.
To see this, first observe that for all $\alpha\in\Aut(S)$ we have 
$U_{j-1}^\alpha=\underline{S}^\alpha=\underline{S}=U_{j-1}$ by
\eqref{eq:homomorphism} since $S$ is normalized. 
Furthermore, $X|_{U_{j-1}}=Y|_{U_{j-1}}=S$. 
Thus, by~\eqref{eq:value-assignment-action} it holds that 
for all $\alpha=(\pi,\sigma)\in\Aut(S)$ with $\pi\in\Sym(U)$
and $\sigma:U\rightarrow\Sym(R)$ we have $Y=X^\alpha$ if and only 
if $q=p^\pi$ and 
$Y(q)=X^\alpha(q)=X(q^{\pi^{-1}})^{\sigma(q)}=X(p)^{\sigma(p^{\pi})}$. 
Or what is the same by \eqref{eq:wr-action-ur}, if and only if 
$(q,Y(q))=(p^\pi,X(p)^{\sigma(p^\pi)})=(p,X(p))^{\alpha}$.
\end{proof}
Order the elements $X\in e(S)$ based on the lexicographic
ordering of the pairs $(p,X(p))\in u_j^{\Aut(U_{j-1})}\times R$.
We now have that (T2'') holds 
if and only if (T2) holds for this ordering of $e(S)$.
The correctness of procedure (P'') equipped with
the tests (T1') and (T2'') now follows from Lemma~\ref{lem:mckay-full}.

\section{Representation using vertex-colored graphs}
\label{sect:colored-graphs}

This section describes one possible approach to represent the group of 
symmetries $\Gamma\leq\Sym(U)$ of a system of constraints over a finite 
set of variables $U$ taking values in a finite set $R$.
Our representation of choice will be vertex-colored graphs over a fixed 
finite set of vertices $V$. In particular, isomorphisms between such graphs 
are permutations $\gamma\in\Sym(V)$ that map edges onto edges and respect the
colors of the vertices; that is, every vertex in $V$ maps to a vertex of the
same color under $\gamma$.
It will be convenient to develop the relevant graph representations in steps,
starting with the representation of the constraint system and then proceeding
to the representation of setwise stabilizers and partial assignments. These 
representations are folklore (see e.g.~\cite{KaskiOstergard:2006}) and are 
presented here for completeness of exposition only. 

\subsection{Representing the constraint system}
To capture $\Gamma\cong\Aut(G)$ via a vertex-colored graph $G$ with 
vertex set $V$, it is convenient to represent the variables $U$ directly 
as a subset of vertices $U\subseteq V$ such that no vertex in $V\setminus U$ 
has a color that agrees with a color of a vertex in $U$. We then seek a 
graph $G$ such that $\Aut(G)\leq\Sym(U)\times\Sym(V\setminus U)$ projected
to $U$ is exactly $\Gamma$. In most cases such a graph $G$ is concisely 
obtainable by encoding the system of constraints with additional vertices 
and edges joined to the vertices representing the variables in $U$. 
We discuss two examples. 

\begin{example}
\label{exa:cnf}
Consider the system of clauses \eqref{eq:example} and its graph 
representation \eqref{eq:example-graph}. The latter can be obtained as 
follows. First, introduce a blue vertex for each of the six variables 
of \eqref{eq:example}. These blue vertices constitute the subset $U$. 
Then, to accommodate negative literals, introduce a red vertex joined by 
an edge to the corresponding blue vertex representing the positive literal. 
These edges between red and blue vertices ensure that positive and negative 
literals remain consistent under isomorphism. Finally, introduce a green
vertex for each clause of \eqref{eq:example} with edges joining the clause 
with each of its literals. It is immediate that we can reconstruct
\eqref{eq:example} from \eqref{eq:example-graph} up to labeling of 
the variables even after arbitrary color-preserving permutation of the 
vertices of \eqref{eq:example-graph}. Thus, \eqref{eq:example-graph} 
represents the symmetries of \eqref{eq:example}.
\end{example}

Let us next discuss an example where it is convenient to represent the 
symmetry at the level of original constraints rather than at the level 
of clauses. 

\begin{example}
\label{exa:brent}
Consider the following system of eight cubic equations over 24 variables 
taking values modulo 2: 
\[
\begin{array}{l@{\hspace{5mm}}r}
  x_{11}y_{11}z_{11} + x_{12}y_{12}z_{12} + x_{13}y_{13}z_{13} = 0
  &
  x_{21}y_{11}z_{11} + x_{22}y_{12}z_{12} + x_{23}y_{13}z_{13} = 0
  \\
  x_{11}y_{11}z_{21} + x_{12}y_{12}z_{22} + x_{13}y_{13}z_{23} = 0
  &
  x_{21}y_{11}z_{21} + x_{22}y_{12}z_{22} + x_{23}y_{13}z_{23} = 1
  \\
  x_{11}y_{21}z_{11} + x_{12}y_{22}z_{12} + x_{13}y_{23}z_{13} = 1
  &
  x_{21}y_{21}z_{11} + x_{22}y_{22}z_{12} + x_{23}y_{23}z_{13} = 1
  \\
  x_{11}y_{21}z_{21} + x_{12}y_{22}z_{22} + x_{13}y_{23}z_{23} = 1
  &
  x_{21}y_{21}z_{21} + x_{22}y_{22}z_{22} + x_{23}y_{23}z_{23} = 1 
\end{array}
\]
This system seeks to decompose a $2 \times 2 \times 2$ 
tensor (whose elements appear on the right hand sides of the equations)
into a sum of three rank-one tensors. The symmetries of addition and multiplication modulo 2 imply that the symmetries of the system can be 
represented by the following vertex-colored graph:

\begin{center}
\includegraphics[width=0.99\textwidth]{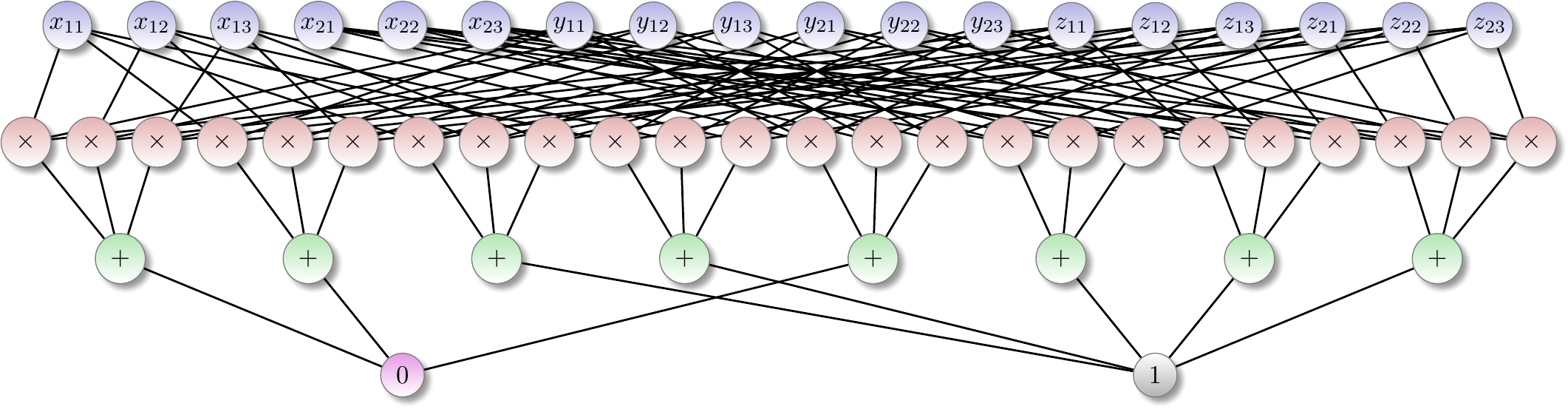}
\end{center}

\noindent
Indeed, we encode each monomial in the system with a product-vertex, and 
group these product-vertices together by adjacency to a sum-vertex to 
represent each equation, taking care to introduce two uniquely colored 
constant-vertices to represent the right-hand side of each equation.
\end{example}

\noindent
\emph{Remark.}
The representation built directly from the system of polynomial equations 
in Example~\ref{exa:brent} concisely captures the symmetries in the system 
independently of the final encoding of the system (e.g. as CNF) for solving 
purposes. 
In particular, building the graph representation from such a final 
CNF encoding (cf.~Example~\ref{exa:cnf}) results in a less compact graph
representation and obfuscates the symmetries of the original system, 
implying less efficient symmetry reduction.

\subsection{Representing the values}
In what follows it will be convenient to assume that the graph $G$ 
contains a uniquely colored vertex for each value in $R$. 
(Cf. the graph in Example~\ref{exa:brent}.)
That is, we assume that $R\subseteq V\setminus U$ and that $\Aut(G)$ 
projected to $R$ is the trivial group.

\subsection{Representing setwise stabilizers in the prefix chain}
To enable procedure (P') and the tests (T1') and (T2'), we require
generators for $\Aut(U_j)\leq\Gamma$ for each $j=0,1,\ldots,k$. 
More generally, given a subset $W\subseteq U$, we seek to compute a set of 
generators for the setwise stabilizer 
$\Aut_\Gamma(W)=\Gamma_W=\{\gamma\in\Gamma:W^\gamma=W\}$, 
with $W^\gamma=\{w^\gamma:w\in W\}$.
Assuming we have available a vertex-colored graph $G$ that represents $\Gamma$ 
by projection of $\Aut_{\Sym(V)}(G)$ to $U$, let us define the graph 
$G\!\uparrow\!W$ by selecting one vertex $r\in R$ and joining each vertex 
$w\in W$ with an edge to the vertex $r$. It is immediate that 
$\Aut_{\Sym(V)}(G\!\uparrow\!W)$ projected to $U$ is precisely 
$\Aut_\Gamma(W)$. 

\subsection{Representing partial assignments}
Let $X:W\rightarrow R$ be an assignment of values in $R$ to variables
in $W\subseteq U$. Again to enable procedure (P') together with 
the tests (T1') and (T2'), we require a canonical labeling $\kappa(X)$ and
generators for the automorphism group $\Aut(X)$. 
Again assuming we have a vertex-colored graph $G$ 
that represents $\Gamma$, let us define the graph $G\!\uparrow\!X$ by
joining each vertex $w\in W$ with an edge to the vertex $X(w)\in R$. 
It is immediate that $\Aut_{\Sym(V)}(G\!\uparrow\!X)$ projected to $U$ is 
precisely $\Aut_\Gamma(X)$. Furthermore, a canonical labeling $\kappa(X)$
can be recovered from $\kappa(G\!\uparrow\!X)$ and the 
canonical form $(G\!\uparrow\!X)^{\kappa(G\uparrow\!X)}$.

\subsection{Using tools for vertex-colored graphs}
Given a vertex-colored graph $G$ as input, practical tools exist for computing
a canonical labeling $\kappa(G)\in\Sym(V)$ and a set of generators for
$\Aut(G)\leq\Sym(V)$. Such tools include 
\emph{bliss}~\cite{JunttilaKaski:2007},
\emph{nauty}~\cite{McKay:1981,McKayPiperno:2014}, 
and \emph{traces}~\cite{McKayPiperno:2014}. Once the canonical labeling
and generators are available in $\Sym(V)$ it is easy to map back 
to $\Gamma$ by projection to $U$ so that corresponding elements 
of $\Gamma$ are obtained. 

\section{Parallel implementation}
\label{sect:implementation}

This section outlines the parallel implementation of our technique into
a tool called \texttt{reduce}. The implementation is written in C++ and 
structured as a preprocessor that works with an explicitly given graph 
representation. In the absence of such an input graph, the graph is 
constructed automatically from CNF as described in 
Section~\ref{sect:colored-graphs}. The
\emph{nauty}~\cite{McKay:1981,McKayPiperno:2014}{} canonical labeling
software for vertex-colored graphs is utilized as a subroutine.

\subsection{Backtracking search for partial assignments} 

The backtracking search for partial assignments is implemented using 
a stack that stores nodes of the search tree. 
(Recall Example~\ref{exa:tree} for an illustration of a search tree.) 
Each node $X$ in the stack represents the 
complete subtree of the search tree rooted at $X$. Initially, the stack 
consists of the empty assignment, which represents the entire search tree. 

Throughout the search, we maintain the invariant that the nodes stored in the
stack represent pairwise node-disjoint subtrees of the search tree, which 
enables us to work through the contents of the stack in arbitrary order and 
to distribute the contents of the stack to multiple compute nodes as necessary; 
we postpone a detailed discussion of the distribution of the stack and 
parallelization of the search to Section~\ref{sect:parallelization}.

Viewed as a sequential process, 
the search proceeds by iterating the following work procedure until the
stack is empty:

\begin{center}
\begin{tabular}{@{}l@{\hspace*{3mm}}p{10.5cm}}
(W) & 
Pop an assignment $X_\ell$ from the stack. Unless $X_\ell$ 
is the empty assignment (that is, unless $\ell=0$), it will have the form 
$X_\ell:U_{\ell-1}\cup\{p_\ell\}\to R$
for some $p_\ell\in u_\ell^{\Aut(U_{\ell-1})}$. Furthermore, $X_\ell$ 
extends the normalized assignment $S_{\ell-1}=X|_{U_{\ell-1}}$. 
Execute the test (T1') on $X_\ell$ and $S_{\ell-1}$. 
If the test (T1') fails, reject the subtree of $X_\ell$ from further 
consideration. If either $\ell=0$ or the test (T1') passes, then normalize 
$X_\ell$ to obtain $S_\ell=X_\ell^{\nu(p_\ell)}$ with $S_\ell:U_\ell\to R$.
At this point $S_\ell$ has been accepted as the unique representative of its
isomorphism class. If $\ell=k$, then output the full prefix assignment $S_\ell$.
If $\ell\leq k-1$, proceed to consider extensions of $S_\ell$ at level
$\ell+1$ as follows. 
Iterate over each variable-value pair $(p_{\ell+1},r)$ with 
$p_{\ell+1}\in u_{\ell+1}^{\Aut(U_{\ell})}$ and $r\in R$. 
Construct the assignment $X_{\ell+1}: U_\ell\cup\{p_{\ell+1}\} \to R$ 
by setting $X_{\ell+1}(p_{\ell+1})=r$ and $X_{\ell+1}(u)=S_\ell(u)$ 
for all $u\in U_\ell$. For each constructed $X_{\ell+1}$, perform 
the test (T2'). If the test (T2') passes, push $X_{\ell+1}$ to the stack. 
\end{tabular}
\end{center}
We observe that the procedure (W) above implements procedure (P') using
the stack to maintain the state of the search. In particular, when a single
worker process executes the search, we obtain a standard depth-first 
traversal of the search tree. However, we also observe that procedure (W) 
pushes {\em all} the child nodes of $S_\ell$ to the stack before consulting
the stack for further work. This enables multiple worker processes, 
all executing procedure (W), to work in parallel, if we take care to ensure 
that (i) push and pop operations to the stack are atomic, and (ii)
the termination condition is changed from the stack being empty to the stack 
being empty and all worker procedures being idle. Furthermore, as 
presented in more detail in what follows, we can distribute the stack across 
multiple compute nodes by appropriately communicating push and pop requests 
between nodes.

\subsection{Parallelization and distributing the stack}
\label{sect:parallelization}
We parallelize the search using the
OpenMPI implementation~\cite{Gabriel:2004} of the Message Passing
Interface (MPI)~\cite{MPI:2015,Pacheco:1997,Gropp:2014}.
We provide two different communication modes, both of which rely on a
\emph{master--slave} paradigm with $N$ processes. The {\em master}
process with rank 0 distributes the work to
$N-1$ {\em worker} processes that, in turn, communicate their results 
back to the master process. We now proceed with a more detailed
description of the two communication modes. 

\medskip
\noindent
{\em Master stack mode.}
In the simpler of the two modes, the master process stores the entire
stack. The worker processes interact with the master directly,
making push and pop requests to the master process via MPI messages. 
While inefficient in terms of communication and in terms of potentially 
overwhelming the
master node, this mode provides load balancing that is empirically
adequate for a small number of compute nodes and instances whose
search tree is not too wide.

\medskip
\noindent
{\em Hierarchical stack mode.}
The hierarchical stack mode divides the $N-1$ worker nodes into $M$ classes, 
each of which is associated with a subset of levels of the search tree. 
Each worker process maintains a {\em local stack} for nodes at their
respective levels. Whenever a worker process pushes an assignment,
the assignment is stored in the local stack if the level of the 
assignment belongs to the levels associated with the node; otherwise, 
the assignment is communicated to the master process which then pushes 
the assignment to the {\em global stack} maintained in the master process.
Whenever a worker process pops an assignment, the worker process first 
consults its local stack and pops the assignment from the local stack if
an assignment is available; otherwise, the worker process makes a pop 
request to the master process, which supplies an assignment from the global
stack as soon as an assignment of one of the levels associated with the
worker becomes available. This strategy helps in cases where the search tree
becomes very wide; in our experiments, we found that a simple thresholding
into one low-level process that processes levels $1,2,\ldots,t$, and
$N-2$ high level processes that process levels $t+1,t+2,\ldots,k$ was
sufficient. 

\medskip
For both modes of communication, the master process keeps track of the 
worker processes that are idle, that is, workers that have sent pop 
requests that have not been serviced. If all workers are idle and the 
global stack is empty, the master process instructs all worker processes 
to exit and then exits itself. 

These communication modes serve as a proof-of-concept of the practical 
parallelizability of our present technique for symmetry reduction. 
For parallelization to very large compute clusters, we expect that more 
advanced communication strategies will be required 
(see, for example,~\cite{Dinan:2009,Pezzi:2007} or~\cite{Pacheco:1997}); 
however, the implementation of such strategies is beyond the scope of 
the present work.

\section{Experiments}

\label{sect:experiments}

This section documents an experimental evaluation of our parallel 
implementation of the adaptive prefix-assignment technique. Our main
objective is to demonstrate the effective parallelizability of the approach, 
but we will also report on experiments comparing the performance of our tool 
(without parallelization) with existing tools that do not parallelize. 

\subsection{Instances}
\label{sect:instances}

Let us start by defining the families of input instances used in our 
experiments. First, we study the usefulness of an auxiliary 
symmetry graph with systems of polynomial equations aimed at discovering the 
tensor rank of a small $m\times m\times m$ tensor $T=(t_{ijk})$ modulo 2, with
$t_{ijk}\in\{0,1\}$ and $i,j,k=1,2,\ldots m$.  Computing the rank of a
given tensor is NP-hard \cite{Hastad:1990}.%
\footnote{Yet considerable interest exists to determine tensor ranks
  of small tensors, in particular tensors that encode and enable fast
  matrix multiplication algorithms;
  cf.~\cite{Alekseev:2014,Alekseev:2015,Alekseev:2013,Alekseyev:1985,Blaser:1999,Blaser:2003,Courtois:2012,Hopcroft:1971,Laderman:1976,Strassen:1969,Winograd:1971}. For numerical work on discovering small low-rank tensor decompositions, cf.~\cite{Benson:2015,Huang:2017,Smirnov:2013}.}{}
In precise terms, we seek to find the minimum $r$ such that there
exist three $m\times r$ matrices $A,B,C\in\{0,1\}^{m\times r}$ such
that for all $i,j,k=1,2,\ldots,m$ we have
\begin{equation}
\label{eq:tensor-rank}
\sum_{\ell=1}^r a_{i\ell}b_{j\ell}c_{k\ell}=t_{ijk}\pmod 2\,. 
\end{equation}
Such instances are easily compilable into CNF with $A,B,C$
constituting three matrices of Boolean variables so that the task
becomes to find the minimum $r$ such that the compiled CNF instance is
satisfiable. Independently of the target tensor $T$, such instances
have a symmetry group of order at least $r!$ due to the fact that the
columns of the matrices $A,B,C$ can be arbitrarily permuted so that
\eqref{eq:tensor-rank} maps to itself.  In our experiments, we select
the entries of $T$ uniformly at random so that the number of $1$s in
$T$ is exactly $n$. We use the first three rows of the matrix A as the
prefix sequence.

As a further family of instances with considerable symmetry, we study
the \emph{Clique Coloring Problem} (CCP) that yields empirically
difficult-to-solve instances for contemporary SAT
solvers~\cite{Manthey:2014}. For positive integer parameters $n$, $s$,
and $t$, the CCP asks whether there exists an undirected $t$-colorable
graph on $n$ nodes such that the graph contains a complete graph $K_s$
as a subgraph. Such instances are unsatisfiable if $s>t$. The
particular encoding that we use (see~\cite{Manthey:2014}) is as
follows. Introduce variables $x_{i,j}$ for $1\leq i,j \leq n$ with
$i\neq j$ to indicate the presence of an edge joining vertex $i$ and
$j$, variables $y_{p,j}$ for $1\leq p \leq s$ with $1\leq j\leq n$ to
indicate that vertex $j$ occupies slot $p$ in a clique, and variables
$z_{i,k}$ for $1\leq i \leq n$ and $1 \leq k \leq t$ to indicate that
vertex $i$ has color $k$. The clauses are
\begin{enumerate}
\item $\bigwedge_{1\leq p \leq s} \bigvee_{1 \leq j \leq n} y_{p,j}$\,,
\item $\bigwedge_{1\leq p\leq s} \bigwedge_{1\leq q \leq s : p\neq q} \bigwedge_{1 \leq j \leq n} \overline{y_{p,j}} \vee \overline{y_{q,j}}$\,,
\item $\bigwedge_{1\leq p\leq s} \bigwedge_{1\leq q \leq s : p\neq q} \bigwedge_{1 \leq i \leq n} \bigwedge_{1 \leq j \leq n : i \neq j} \overline{y_{p,i}} \vee \overline{y_{q,j}} \vee x_{i,j}$\,,
\item $\bigwedge_{1\leq k\leq t} \bigwedge_{1 \leq i \leq n} \bigwedge_{1 \leq j \leq n : i \neq j} \overline{z_{i,k}} \vee \overline{z_{j,k}} \vee \overline{x_{i,j}}$ \,, and
\item $\bigwedge_{1\leq i \leq n} \bigvee_{1 \leq k \leq t} z_{i,k}$\,.
\end{enumerate}
We consider unsatisfiable instances with parameters $s \in\{5,6\}$,
$t = s-1$, and let $n$ vary from 15 to 20 in the case of $s=5$ and 12
to 24 when $s=6$. We use the variables $y_{1,1},y_{1,2},\ldots,y_{1,n}$ 
as the prefix sequence. The auxiliary graph for encoding the symmetries
is constructed as follows. Introduce a vertex for each variable
$x_{i,j}$, for each variable $y_{p,j}$, and for each variable $z_{i,k}$. 
These vertices are colored with three distinct colors, 
one color for each type of variable. Next, introduce three types of
auxiliary vertices, with each type colored with its own distinct color.
Introduce vertices $1,2,\ldots,n$ for the $n$ nodes, vertices 
$1',2',\ldots,s'$ for the $s$ clique slots, and vertices $1'',2'',\ldots,t''$
for the $t$ node colors. Thus, in total the graph consists
of $n(n-1)+sn+tn+n+s+t$ vertices colored with six distinct colors. 
To complete the construction of the auxiliary graph, 
introduce edges to the graph so that each variable $x_{i,j}$ is joined to 
the nodes $i$ and $j$, each variable $y_{p,j}$ is joined to clique slot $p'$ 
and to the node $j$, and each variable $z_{i,k}$ is joined to the node $i$ 
and to the node color $k''$.

We study the parallelizability of our algorithm using two input instances
with hard combinatorial symmetry. 
The first instance, which we call $R(4,4;18)$ in what follows, is 
an unsatisfiable CNF instance that asks whether there exists an 18-node graph 
with the property that neither the graph nor its complement contains the 
complete graph $K_4$ as a subgraph. That is, we ask whether the Ramsey number 
$R(4,4)$ satisfies $R(4,4) > 18$ (in fact, $R(4,4)=18$~\cite{Graham:1990}).  
No auxiliary graph is provided to accompany this instance. 
The second instance consists of an empty CNF over 36 variables together with
an auxiliary graph that encodes the isomorphism classes of 9-node graphs by
inserting a variable vertex in the middle of each of the $\binom{9}{2}=36$ 
edges of the complete graph $K_9$. Applying \texttt{reduce} with 
a length-36 prefix sequence (listing the 36 variable vertices in any order) 
yields a complete listing of all the 274668 isomorphism classes of 9-node 
graphs. The number of isomorphism classes of graphs of order $n$ is the 
sequence A000088 in the {\em Online Encyclopedia of Integer Sequences}.

\subsection{Hardware and software configuration}
The experiments were performed on a cluster of Dell PowerEdge C4130
compute nodes, each equipped with two Intel Xeon E5-2680v3 CPUs 
(12 cores per CPU, 24 cores per node) and 128 GiB (8$\times$16 GiB) of DDR4-2133 main memory, 
running the CentOS 7 distribution of GNU/Linux.
Comparative experiments were executed by allocating a single core on a
single CPU of a compute node.  All experiments were conducted as batch jobs
using the \texttt{slurm} batch scheduler, and running between one to four 
physical nodes, with one to 24 cores allocated in each node, using one 
MPI process per core.  OpenMPI version 2.1.1 was used as the MPI implementation.

\subsection{Symmetry reduction tools and SAT solvers}
We report on three methods for symmetry reduction:
(1) no reduction (``\texttt{raw}''), 
(2) \texttt{breakid} version 2.1-152-gb937230-dirty\footnote{We thank
  Bart Bogaerts for implementing custom graph input in \texttt{breakid}.}~\cite{DevriendtEtAl:SAT2016},
(3) our technique (``\texttt{reduce}'') with a user-selected prefix.
Three different SAT solvers were used in the experiments:
\texttt{lingeling} and \texttt{ilingeling} version
\texttt{bbc-9230380}~\cite{Biere:SATCOMP2016}, 
and \texttt{glucose} version
4.1~\cite{Audemard:2016}. 
We use the incremental solver \texttt{ilingeling} together with the
incremental CNF output of \texttt{reduce}. 

\subsection{Experiments on parallel speedup}
\label{sect:parallelexperiments}

This section documents experiments that study the wall-clock running time 
of symmetry reduction using our tool \texttt{reduce} as we increase the number
of CPU cores and compute nodes participating in parallel 
symmetry reduction. The range of the experiments was between one to four 
compute nodes, with one to 24 cores allocated in each node. 
One MPI process was launched per core. Each node was exclusively reserved 
for the experiment. In addition to the wall-clock running time, we 
measure the {\em total reserved time} that is obtained by recording, 
for each core, the length of the time interval the core is reserved 
for an experiment, and taking the sum of these time intervals. 
The total reserved time conservatively tracks the total resources consumed 
by an experiment in a batch job environment regardless of whether each 
allocated core is running or idle. 

The results of our parallel speedup experiments are displayed in
Figure~\ref{fig:mpi}. The top-left plot in the figure displays the
parallel speedup (ratio of parallel wall-clock running time to
sequential running time) of running our tool \texttt{reduce} on the
instance $R(4,4;18)$ with a length-33 prefix sequence as a function
of the number of cores used for one, two, and four allocated compute
nodes. We also display the line $y=x$ for reference to compare against
perfect linear speedup.  As the number of cores grows, in the top-left
plot we observe linear scaling of the speedup as a function of the
number of cores. The slope of the speedup yet remains somewhat short
of the perfect $y=x$ scaling. This is most likely due to the use of
the master stack mode and associated communication overhead.  
The top-right plot displays the total reserved time to demonstrate the total
resource usage in addition to the parallel speedup. 
Table~\ref{tab:r4418nodecount} displays the number of canonical
partial assignments at different levels of the search tree explored 
by \texttt{reduce}. 

The two plots in the middle row of Figure~\ref{fig:mpi} display the
parallel speedup and the total reserved time of executing our tool
\texttt{reduce} on the instance A000088 (with $n=9$ and a length-36
prefix sequence) in the master stack mode. This instance requires
extensive stack access with many easy instances of canonical labeling
(cf.~Table~\ref{tab:a000088nodecount} and compare with 
Table~\ref{tab:r4418nodecount}); accordingly we observe poor speedup from 
parallelization in the master stack mode. The two plots in the bottom row of
Figure~\ref{fig:mpi} show an otherwise identical experiment but now
executed in hierarchical stack mode with the threshold parameter set to
$t=21$, in which case both the parallel speedup obtained and the total 
resource usage become substantially better.

When the number of processes is small, Figure~\ref{fig:mpi} reveals 
inefficiency in terms of the total reserved time compared with a larger
number of processes. This inefficiency is explained by two factors. 
First, when the number of processes is small,
a significant fraction of the total reserved time is used by the master process
which does not contribute work to the exploration of the search tree
but does consume reserved time from the start to the end of the computation. 
As soon as more worker processes start exploring the search tree, the 
total reserved time decreases because the time consumed by the master process 
decreases. Second, in hierarchical stack mode, a small number of processes 
means that some of the worker nodes processing lower levels of the search 
tree can run out of work---but will still consume total reserved time---as 
assignments in these levels are exhausted, while the small number of processes 
assigned to work on the higher levels of the tree still remain at work. 
This bottleneck can be alleviated by increasing the number of workers 
associated with the higher levels.

\begin{figure}
  \begin{tabular}{cc}
    \includegraphics[width=0.45\textwidth]{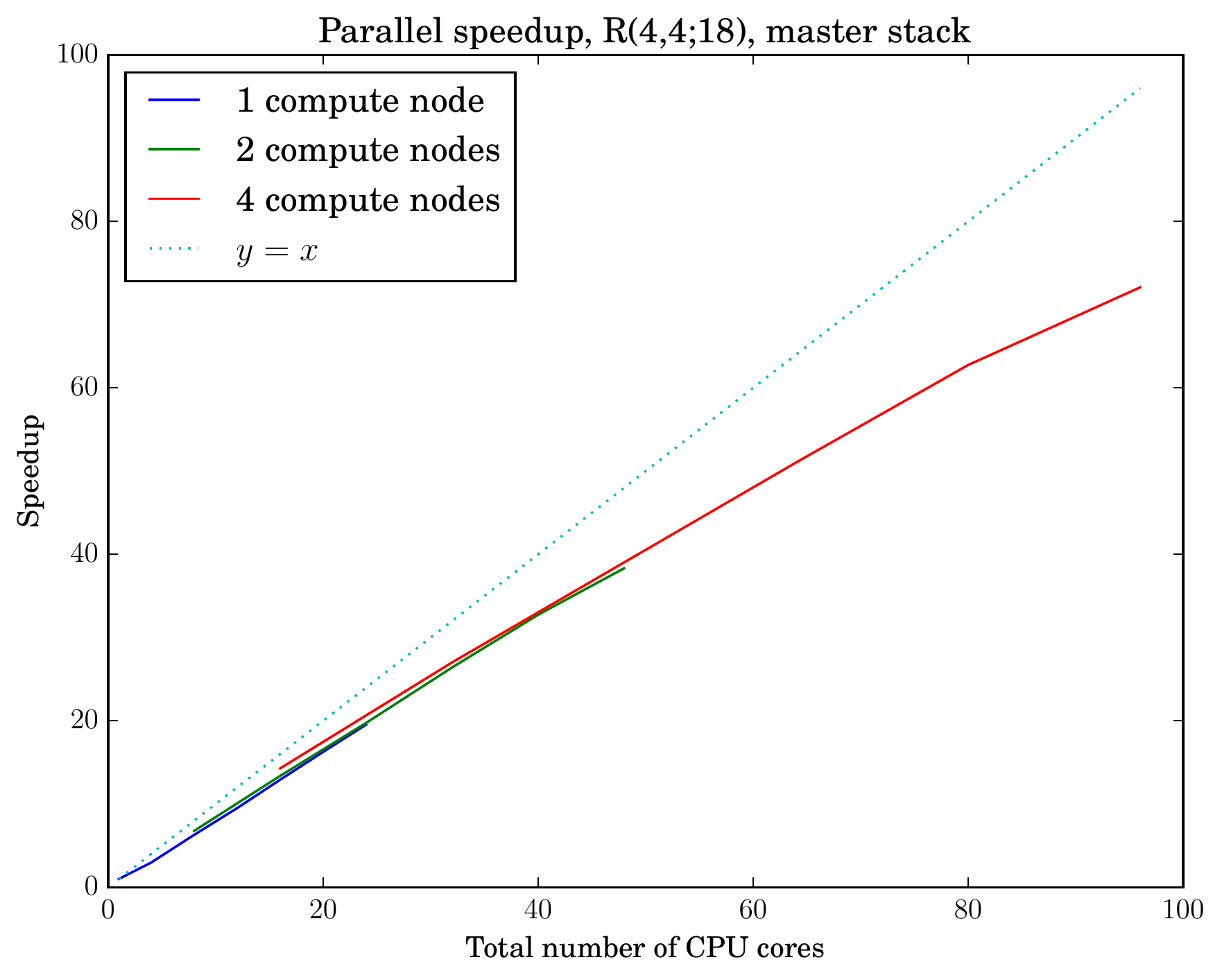} &
    \includegraphics[width=0.45\textwidth]{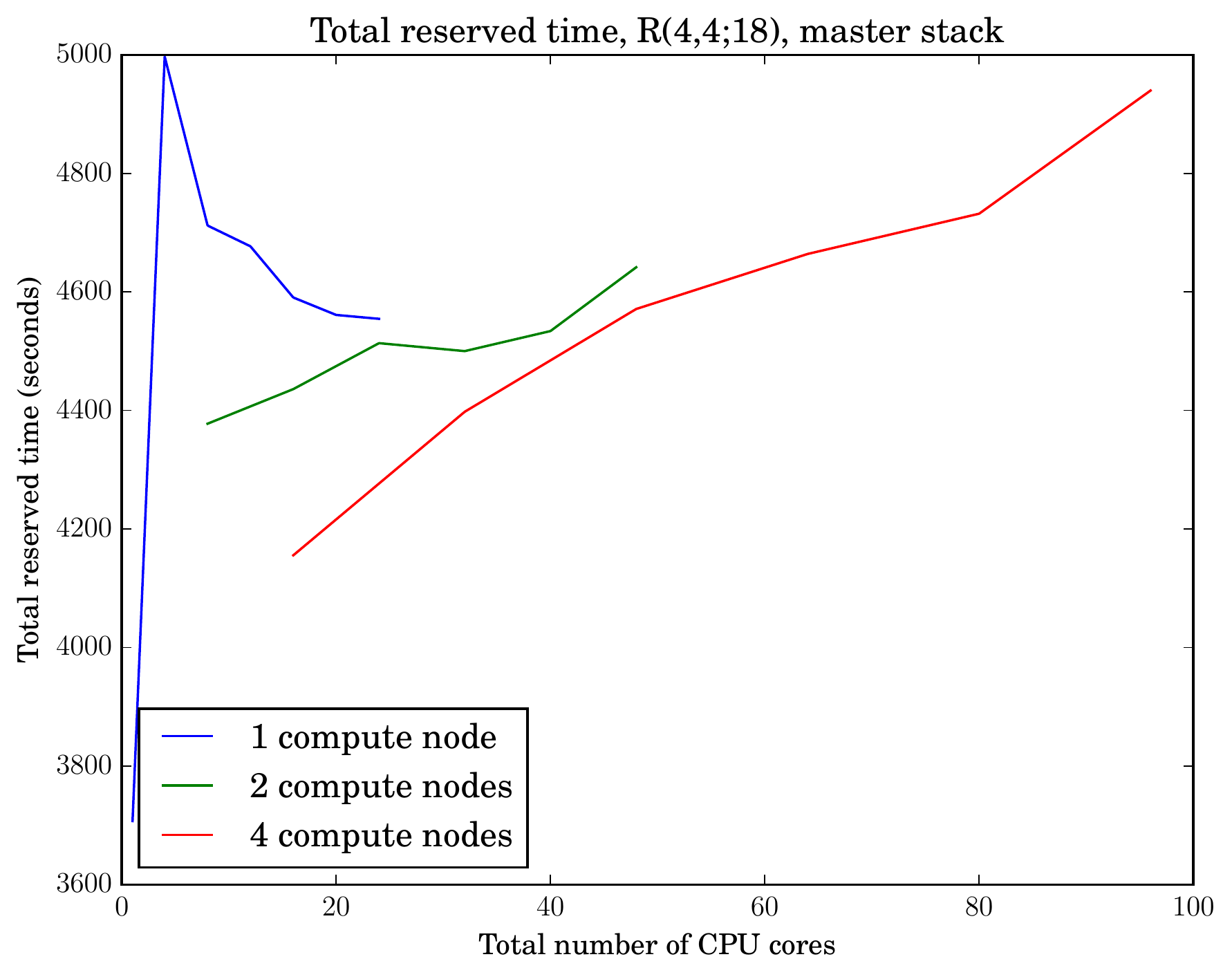} \\
    \includegraphics[width=0.45\textwidth]{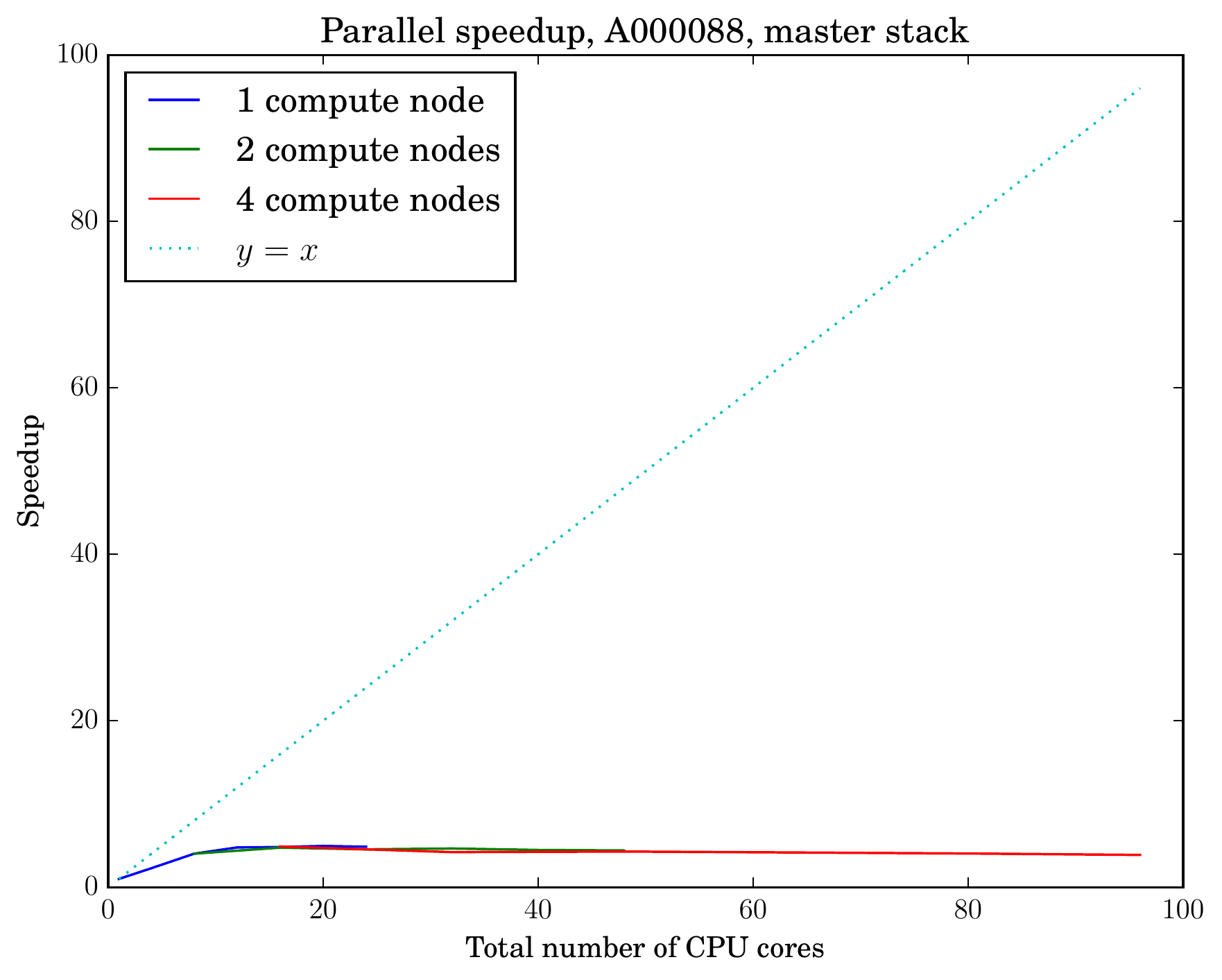} &
    \includegraphics[width=0.45\textwidth]{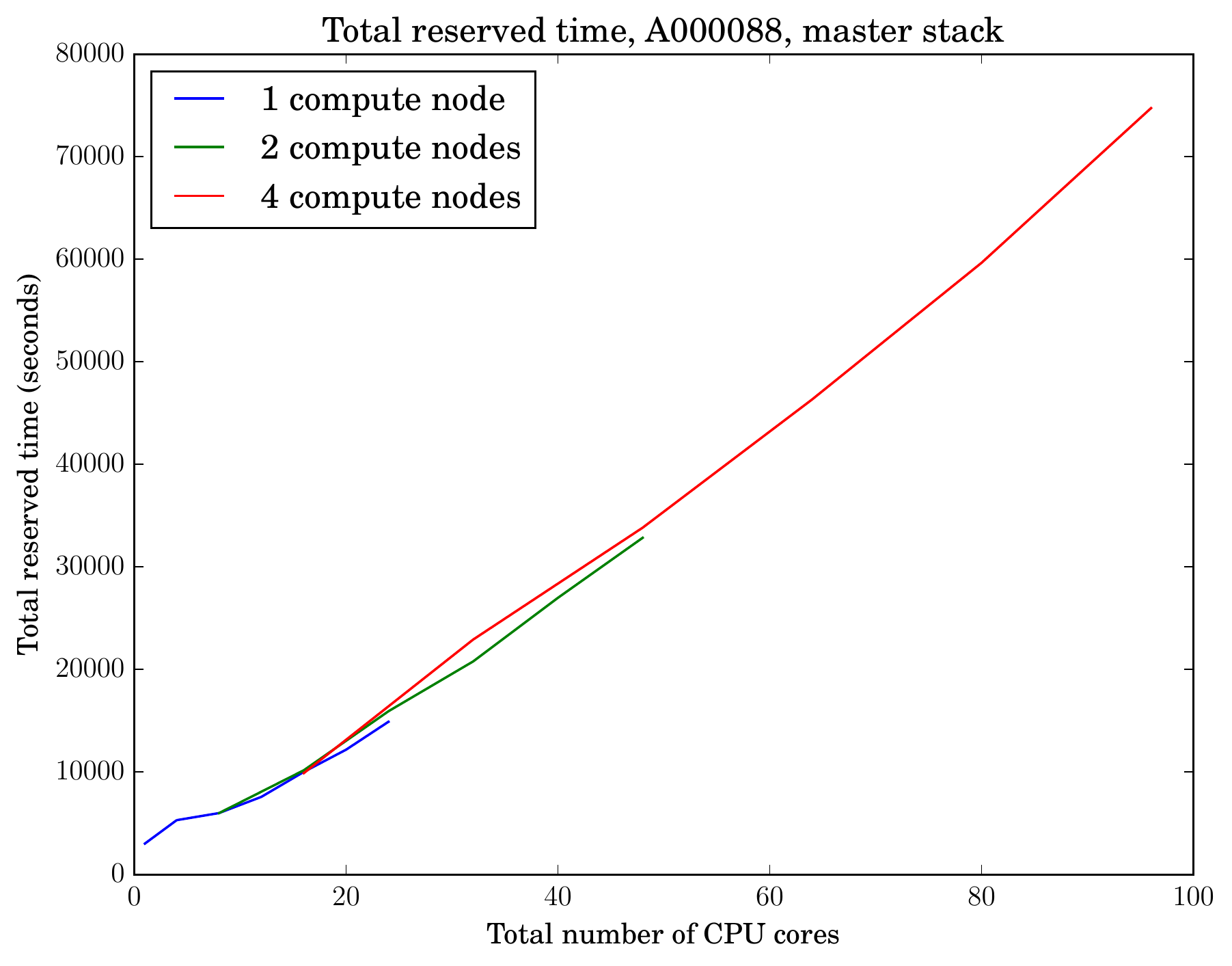} \\
    \includegraphics[width=0.45\textwidth]{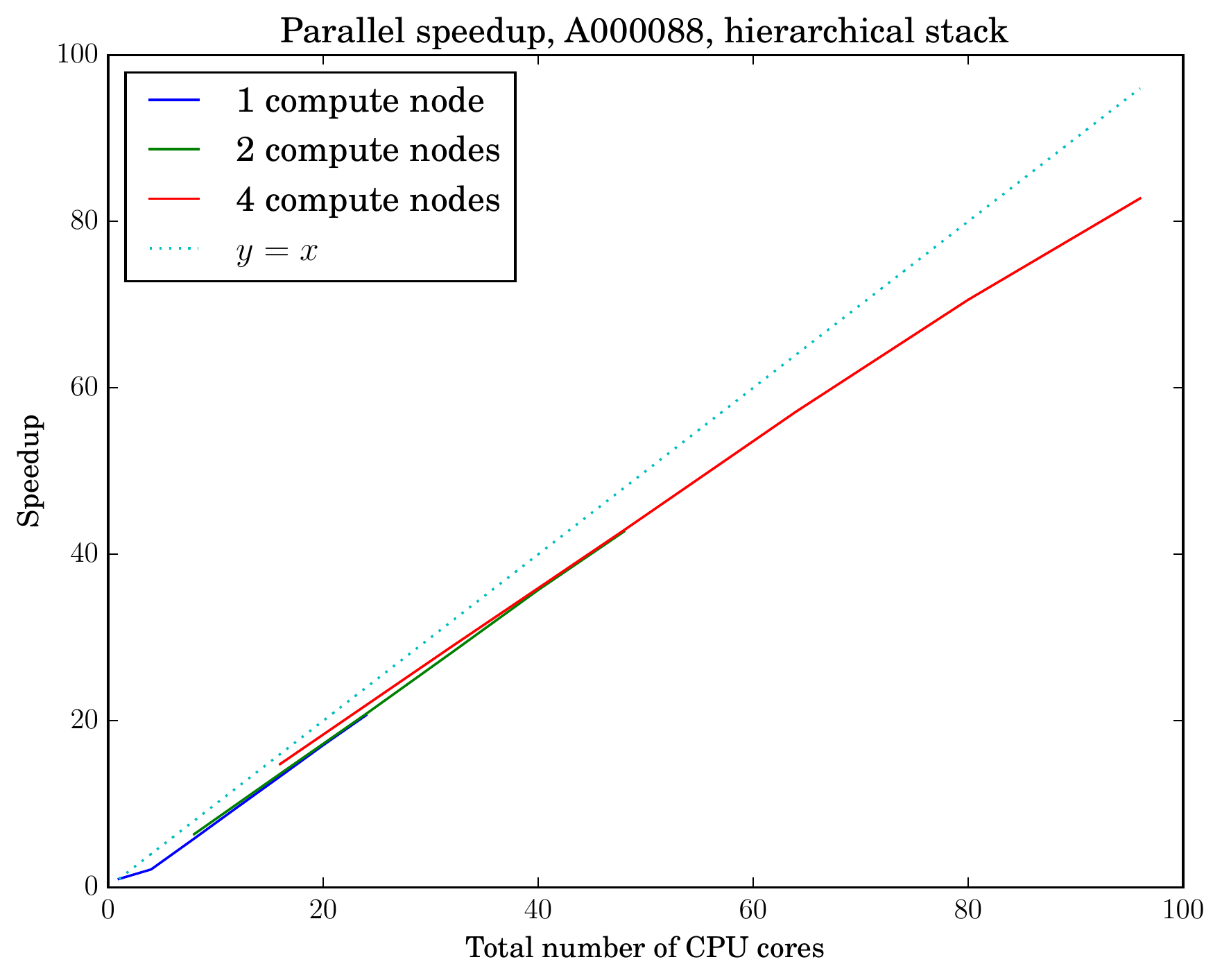} &
    \includegraphics[width=0.45\textwidth]{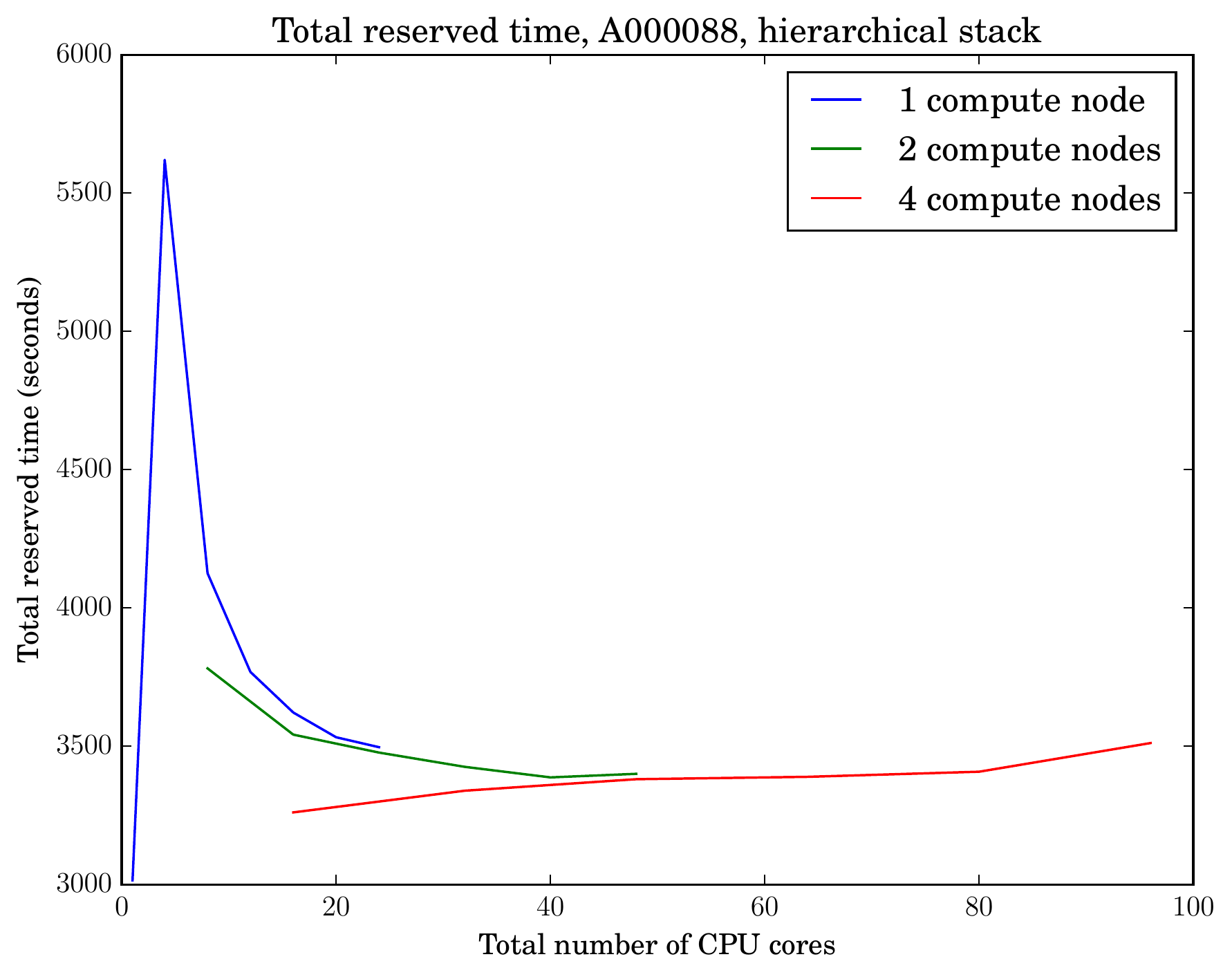} 
  \end{tabular}
  \caption{The plots on the left display the parallel speedup factor
    obtained with increasing number of cores. The plots on the right
    display the total reserved time in seconds of all the cores with
    increasing number of cores. The first row shows the instance
    $R(4,4;18)$, with parallelization executed using the master stack
    mode of communication.  We observe solid parallel speedup with
    increasing number of cores. The second row shows the instance
    A000088 using the master stack mode of communication. Here the
    speedup is unsatisfactory due to extensive accesses to the master
    stack caused by a wide search tree with many easy instances of 
    canonical labeling (cf.~Table~\ref{tab:a000088nodecount}). 
    This bottleneck can be alleviated by to the hierarchical
    stack mode. The third row shows the instance A000088 executed the
    hierarchical stack mode. Now we observe solid parallel speedup
    with increasing number of cores. The peaks in the total reserved
    time for a small number of cores are caused by an unbalanced work 
    allocation between the cores that eases with increasing number of
    cores; see Section~\ref{sect:parallelexperiments}.}
  \label{fig:mpi}
\end{figure}

\begin{table}
  \caption{The number of canonical partial assignments in the instance $R(4,4;18)$ at different levels of the search tree explored by \texttt{reduce}.}
  \label{tab:r4418nodecount}
  \begin{tabular}{|rr|c|rr|c|rr|}
    \cline{1-2}\cline{4-5}\cline{7-8}
  level & assignments   &&         level & assignments   &&  level & assignments\\\hhline{==~==~==}
1	&	2	&&	12	&	13	&&	23	&	1848	\\
2	&	3	&&	13	&	14	&&	24	&	2400	\\
3	&	4	&&	14	&	15	&&	25	&	2970	\\
4	&	5	&&	15	&	16	&&	26	&	3520	\\
5	&	6	&&	16	&	17	&&	27	&	4004	\\
6	&	7	&&	17	&	18	&&	28	&	4368	\\
7	&	8	&&	18	&	96	&&	29	&	4550	\\
8	&	9	&&	19	&	300	&&	30	&	4480	\\
9	&	10	&&	20	&	560	&&	31	&	4080	\\
10	&	11	&&	21	&	910	&&	32	&	3264	\\
11	&	12	&&	22	&	1344	&&	33	&	1050	\\\cline{1-2}\cline{4-5}\cline{7-8}
  \end{tabular}
\end{table}

\begin{table}[h!]
  \caption{The number of canonical partial assignments in the instance A000088 at different levels of the search tree explored by \texttt{reduce}. We observe that the search tree is considerably wider at the intermediate levels compared with the last level.}{}
  \label{tab:a000088nodecount}
  \begin{tabular}{|rr|c|rr|c|rr|}
    \cline{1-2}\cline{4-5}\cline{7-8}
  level & assignments   &&         level & assignments   &&  level & assignments\\\hhline{==~==~==}
    1	&	2	&&	13	&	336	&&	25	&	346376	\\
2	&	3	&&	14	&	336	&&	26	&	47418	\\
3	&	4	&&	15	&	140	&&	27	&	644016	\\
4	&	5	&&	16	&	1216	&&	28	&	3256288	\\
5	&	6	&&	17	&	5256	&&	29	&	4336496	\\
6	&	7	&&	18	&	9936	&&	30	&	508140	\\
7	&	8	&&	19	&	13664	&&	31	&	5245032	\\
8	&	9	&&	20	&	13104	&&	32	&	19768096	\\
9	&	42	&&	21	&	2676	&&	33	&	2409488	\\
10	&	120	&&	22	&	34500	&&	34	&	13814848	\\
11	&	200	&&	23	&	183120	&&	35	&	4147832	\\
12	&	280	&&	24	&	328032	&&	36	&	274668	\\\cline{1-2}\cline{4-5}\cline{7-8}
  \end{tabular}
\end{table}

\subsection{Experiments comparing with other tools}

We compared our present tool \texttt{reduce} against the 
tool \texttt{breakid}~\cite{DevriendtEtAl:SAT2016}. 
Since \texttt{breakid} does not parallelize, no parallelization was used 
in these experiments and all experiments were executed using a single 
compute core. All running times displayed in the tables that follow 
are in seconds, with ``t/o'' indicating a time-out of 25 hours of wall-clock 
time. Other compute load was in general present on the compute nodes where 
these experiments were run. 

Table \ref{tbl:equationswgraph} shows the results of a tensor rank 
computation modulo 2 for two random tensors $T$ with $m=5$, $n=9$ and $m=5$, 
$n=20$ with (top table) and without (bottom table) an auxiliary graph. 
When $m=5$ and $n=9$, the tensor has rank 8 and decompositions for rank 7 
and 8 are sought. When $m=5$ and $n=20$, the tensor has rank 9 and 
decompositions of rank 8 and 9 are sought. For both tensors we observe 
decreased running time due to symmetry reduction. Comparing the top and
bottom tables, we observe the relevance of the
graph representation of the symmetries in \eqref{eq:tensor-rank},
which are not easily discoverable from the compiled CNF. 
As the auxiliary graph, we used the graph representation of the 
system \eqref{eq:tensor-rank}, constructed as in Example~\ref{exa:brent}.

\begin{table}
\caption{Comparing different tools for preprocessing and then solving instances with hard symmetry not easily discoverable from a compiled CNF encoding. Here the instances ask for modulo-2 tensor decompositions for two $5\times 5\times 5$ tensors with (top) and without (bottom) an auxiliary graph. We observe that the auxiliary graph gives a marked improvement in the running times of preprocessing, making all the instances tractable. Without the auxiliary graph to highlight the symmetry, the two unsatisfiable instances are intractable within the timeout threshold of 90,000~seconds. All running times are in seconds.}
\label{tbl:equationswgraph}
\footnotesize
\begin{tabular}{|@{\,}r@{\ }r@{\ }r@{\,}|@{\,}r@{\ }r@{\,}|r@{\,}|@{\,}r@{\ }r@{\,}|@{\ }r@{\ }|@{\ }r@{\,}r@{\,}r@{\ }|@{\,}l@{\ }|}
\hline
\multicolumn{13}{|c|}{with auxiliary graph}\\[1mm]
  & & & \multicolumn{2}{c|}{raw} & prep. & \multicolumn{2}{c@{\,}|@{\ }}{breakid} & prep. & \multicolumn{3}{c@{\,}|@{\,}}{reduce} & \\
$m$ & $r$ & $n$ & glucose & lingeling & breakid & glucose & lingeling & reduce & glucose & lingeling & ilingeling & Sat?
\\\hline\hline
5 & 7 & 9 & t/o & t/o & 0.28 & 30.07 & 30.73 & 87.35 & 77.87 & 66.40 & 47.67 & No \\
5 & 8 & 9 & 0.36 & 3.91 & 0.63 & 0.33 & 5.61 & 290.70 & 1.69 & 13.93 & 0.50 & Yes \\
5 & 8 & 20 & t/o & t/o & 0.61 & 1078.54 & 1273.49 & 290.78 & 2641.97 & 7699.27 & 885.01 & No \\
5 & 9 & 20 & 1.44 & 1.28 & 1.68 & 72.09 & 26.33 & 881.85 & 228.06 & 371.09 & 40.57 & Yes \\
\hline
\end{tabular}\\[2mm]
\footnotesize
\begin{tabular}{|@{\,}r@{\ }r@{\ }r@{\,}|@{\,}r@{\ }r@{\,}|r@{\,}|@{\,}r@{\ }r@{\,}|@{\ }r@{\ }|@{\ }r@{\,}r@{\,}r@{\ }|@{\,}l@{\ }|}
\hline
\multicolumn{13}{|c|}{without auxiliary graph}\\[1mm]
  & & & \multicolumn{2}{c|}{raw} & prep. & \multicolumn{2}{c@{\,}|@{\ }}{breakid} & prep. & \multicolumn{3}{c@{\,}|@{\,}}{reduce} & \\
$m$ & $r$ & $n$ & glucose & lingeling & breakid & glucose & lingeling & reduce & glucose & lingeling & ilingeling & Sat?
\\\hline\hline
5 & 7 & 9 & t/o & t/o & 0.64 & t/o & t/o & t/o & n/a & n/a & n/a & No \\
5 & 8 & 9 & 0.36 & 3.91 & 0.60 & 0.40 & 1.22 & t/o & n/a & n/a & n/a & Yes \\
5 & 8 & 20 & t/o & t/o & 0.32 & t/o & t/o & t/o & n/a & n/a & n/a & No \\
5 & 9 & 20 & 1.44 & 1.28 & 1.59 & 3.30 & 15.10 & t/o & n/a & n/a & n/a & Yes \\
\hline
\end{tabular}
\end{table}

Table~\ref{tbl:ccp} shows the results of applying \texttt{breakid} and our
tool \texttt{reduce} as preprocessors for solving instances of 
the Clique Coloring Problem. We observe that for sufficiently large instances,
our tool is faster than \texttt{breakid} in the combined runtime of 
preprocessor and solver.

Table~\ref{tbl:ccpgraphcomparison} compares running times of \texttt{reduce}
on instances of the Clique Coloring Problem (i) using 
the graph automatically constructed from CNF, and (ii) using a tailored
auxiliary graph constructed as described in Section~\ref{sect:instances}. 
For these instances, the available symmetry can be easily discovered 
directly from the CNF encoding, but we observe that the use of the tailored 
auxiliary graph does result in faster preprocessing times for \texttt{reduce}.

\begin{table}
\caption{Comparing different tools for preprocessing and then solving instances with hard symmetry. Here the instances ask for solutions to the Clique Coloring Problem with parameters indicated on the left. On this family of instances, our tool \texttt{reduce} is faster than the tool \texttt{breakid} for sufficiently large parameters. All instances are unsatisfiable. All running times are in seconds. A timeout threshold of 90,000 seconds was applied.}
\label{tbl:ccp}
\footnotesize
\begin{tabular}{|@{\,}r@{\ }r@{\ }r@{\,}|@{\,}r@{\ }r@{\,}|r@{\,}|@{\,}r@{\ }r@{\,}|@{\ }r@{\ }|@{\ }r@{\ }r@{\ }r|}
\hline
  & & & \multicolumn{2}{c|}{raw} & prep. & \multicolumn{2}{c@{\,}|@{\ }}{breakid} & prep. & \multicolumn{3}{c|}{reduce} \\
$n$ & $s$ & $t$ & glucose & lingeling & breakid & glucose & lingeling & reduce & glucose & lingeling & ilingeling
\\\hline\hline
15 & 5 & 4 & 722.32 & 811.59 & 2.82 & 126.35 & 154.27 & 19.34 & 19.82 & 31.61 & 32.92 \\
16 & 5 & 4 & 1038.88 & 1839.05 & 10.68 & 238.52 & 570.19 & 8.07 & 25.62 & 39.55 & 38.31 \\
17 & 5 & 4 & 4481.54 & 8865.19 & 1.94 & 597.35 & 498.20 & 12.81 & 105.35 & 57.96 & 54.66 \\
18 & 5 & 4 & 2709.96 & 4762.23 & 9.96 & 559.86 & 460.70 & 14.11 & 40.68 & 74.71 & 66.19 \\
19 & 5 & 4 & 6701.65 & 6819.77 & 10.66 & 586.81 & 651.10 & 19.13 & 107.73 & 106.62 & 85.38 \\
20 & 5 & 4 & 8901.20 & 7777.35 & 1.15 & 1294.31 & 1579.77 & 25.37 & 157.91 & 134.55 & 248.56 \\
12 & 6 & 5 & 38835.75 & 15517.28 & 9.87 & 1602.96 & 745.83 & 2.42 & 1190.58 & 677.38 & 751.27 \\
13 & 6 & 5 & 26017.87 & 50312.82 & 9.91 & 7032.11 & 3506.61 & 9.68 & 1439.84 & 1440.30 & 1420.10 \\
14 & 6 & 5 & t/o & t/o & 2.28 & 8417.69 & 5384.79 & 17.18 & 2360.88 & 5559.26 & 2543.99 \\
15 & 6 & 5 & t/o & t/o & 9.05 & 10537.53 & 7316.61 & 6.49 & 3504.82 & 4104.10 & 4140.57 \\
16 & 6 & 5 & t/o & t/o & 9.46 & 41355.16 & 27699.48 & 30.52 & 6858.69 & 5612.36 & 5708.18 \\
17 & 6 & 5 & t/o & t/o & 0.94 & t/o & t/o & 11.48 & 11329.16 & 20597.16 & 10460.73 \\
18 & 6 & 5 & t/o & t/o & 5.34 & t/o & t/o & 23.81 & 17347.43 & 52703.19 & 16873.36 \\
19 & 6 & 5 & t/o & t/o & 7.18 & t/o & t/o & 82.93 & 29689.84 & 19969.09 & 21195.04 \\
20 & 6 & 5 & t/o & t/o & 3.38 & t/o & t/o & 29.36 & 76600.29 & 35850.94 & 27035.80 \\
21 & 6 & 5 & t/o & t/o & 1.50 & t/o & t/o & 148.64 & t/o & 45963.02 & 48542.48 \\
22 & 6 & 5 & t/o & t/o & 1.74 & t/o & t/o & 198.15 & t/o & 61414.88 & 66279.68 \\
23 & 6 & 5 & t/o & t/o & 1.91 & t/o & t/o & 267.67 & t/o & t/o & 78463.34 \\
\hline
\end{tabular}
\end{table}

\begin{table}
\caption{The effect of a tailored auxiliary graph on the running time of \texttt{reduce} when applied on instances of the Clique Coloring Problem with parameters indicated on the left. We see that there is a marked decrease in the running times when the auxiliary graph is available. All running times are in seconds.}
\label{tbl:ccpgraphcomparison}
\footnotesize
\begin{tabular}{|@{\,}r@{\ }r@{\ }r@{\,}|@{\,}r@{\ \ \ }r@{\,}|}
\hline
    &     &     & \multicolumn{2}{c|}{reduce}\\[1mm]
    &     &     & with  & without\\
$n$ & $s$ & $t$ & graph & graph
\\\hline\hline
15 & 5 & 4 & 19.34 & 399.25 \\
16 & 5 & 4 & 8.07 & 158.47 \\
17 & 5 & 4 & 12.81 & 223.24 \\
18 & 5 & 4 & 14.11 & 1151.05 \\
19 & 5 & 4 & 19.13 & 1629.34 \\
20 & 5 & 4 & 25.37 & 586.78 \\
12 & 6 & 5 & 2.42 & 139.14 \\
13 & 6 & 5 & 9.68 & 61.46 \\
14 & 6 & 5 & 17.18 & 381.85 \\
15 & 6 & 5 & 6.49 & 587.87 \\
16 & 6 & 5 & 30.52 & 881.09 \\
17 & 6 & 5 & 11.48 & 322.53 \\
18 & 6 & 5 & 23.81 & 450.28 \\
19 & 6 & 5 & 82.93 & 2388.18 \\
20 & 6 & 5 & 29.36 & 851.68 \\
21 & 6 & 5 & 148.64 & 1195.23 \\
22 & 6 & 5 & 198.15 & 1543.51 \\
23 & 6 & 5 & 267.67 & 5545.20 \\
\hline
\end{tabular}
\end{table}

\subsection*{Acknowledgements}
The research leading to these results has received 
funding from the European Research Council under the European Union's 
Seventh Framework Programme (FP/2007-2013) / ERC Grant Agreement 338077 
``Theory and Practice of Advanced Search and Enumeration'' (M.K., P.K., and J.K.). We gratefully acknowledge the use of computational resources provided 
by the Aalto Science-IT project at Aalto University. 
We thank Tomi Janhunen and Bart Bogaerts for useful discussions. 

A preliminary conference abstract of this paper appeared in Junttila
T., Karppa M., Kaski P., Kohonen J. (2017) An Adaptive
Prefix-Assignment Technique for Symmetry Reduction. In: Gaspers S.,
Walsh T. (eds) Theory and Applications of Satisfiability Testing – SAT
2017. SAT 2017. Lecture Notes in Computer Science, vol
10491. Springer, Cham.


\bibliographystyle{amsplain}
\bibliography{paper}


\end{document}